\documentclass[12pt,a4paper]{article}
\usepackage{amsmath,amsthm}
\numberwithin{table}{section}
\numberwithin{figure}{section}
\usepackage{algorithm}
\usepackage{algorithmic}
\usepackage{makecell}
\usepackage{multirow}
\usepackage{tikz}
\usepackage{setspace}
\linespread{1} 
\usepackage{fullpage} 
\makeatletter
\newenvironment{breakablealgorithm}
  {
   \begin{center}
     \refstepcounter{algorithm}
     \hrule height.8pt depth0pt \kern2pt
     \renewcommand{\caption}[2][\relax]{
       {\raggedright\textbf{\ALG@name~\thealgorithm} ##2\par}%
       \ifx\relax##1\relax 
         \addcontentsline{loa}{algorithm}{\protect\numberline{\thealgorithm}##2}%
       \else 
         \addcontentsline{loa}{algorithm}{\protect\numberline{\thealgorithm}##1}%
       \fi
       \kern2pt\hrule\kern2pt
     }
  }{
     \kern2pt\hrule\relax
   \end{center}
  }
\makeatother
\allowdisplaybreaks[4]
\usepackage{multicol}
\usepackage{hyperref}
\usepackage{caption}[=v1]
\usepackage{color}
\usepackage[titletoc]{appendix}
\usepackage{float}
\usepackage{bm}
\usepackage{enumerate}
\usepackage{booktabs}
\usepackage{extarrows}
\usepackage{graphicx}
\usepackage{subfigure}
\usepackage{amssymb}
\usepackage{fancyhdr}
\usepackage{authblk}
\usepackage{dsfont}
\usepackage{nonfloat}
\usepackage{xcolor}
\usepackage{tikz} 
\usetikzlibrary{arrows,shapes,chains}
\usepackage{hyperref} 
\usepackage{xfrac} 
\usepackage{longtable}
\usepackage{booktabs}
\usepackage{caption}
\bibliographystyle{apalike}
\usepackage[comma,authoryear]{natbib}
\synctex=1
\newtheorem{theorem}{Theorem}[section]
\newtheorem{lemma}{Lemma}[section]

\begin{document}

\title{\textbf {The mean-variance portfolio selection based on the average and current profitability of the risky asset}\footnotemark[1]}

	\author[a]{Yu Li}
	\author[a]{Yuhan Wu}
    	\author[a,b]{Shuhua Zhang}
	\affil[a]{Coordinated Innovation Center for Computable Modeling in Management Science, Tianjin University of Finance and Economics, Tianjin 300222, China}
	\affil[b]{Zhujiang College, South China Agricultural University, Guangzhou 510900, China}
	\renewcommand*{\Affilfont}{\footnotesize\it}

\renewcommand{\thefootnote}{}
\footnotetext{E-mail addresses: liyu@tjufe.edu.cn (Yu Li), wuyuhan@stu.tjufe.edu.cn (Yuhan Wu), szhang@tjufe.edu.cn (Shuhua Zhang)}
\footnotetext[1]{This project was supported in part by the National Natural Science Foundation of China (12271395), the Humanities and Social Science Research Program of the Ministry of Education of China (22YJAZH156) and the Innovation Team Project for Ordinary University in Guangdong Province, China (2023WCXTD022).}
\date{}
\maketitle


\begin{abstract}

We study the continuous-time pre-commitment mean-variance portfolio selection in a time-varying financial market.
By introducing two indexes which respectively express the average profitability of the risky asset (AP) and the current profitability of the risky asset (CP), the optimal portfolio selection is represented by AP and CP.
Furthermore, instead of the traditional maximum likelihood estimation (MLE) of return rate and volatility of the risky asset, we estimate AP and CP with the second-order variation of an auxiliary wealth process.
We prove that the estimations of AP and CP in this paper are more accurate than that in MLE.
And, the portfolio selection is implemented in various simulated and real financial markets.
Numerical studies confirm the superior performance of our portfolio selection with the estimation of AP and CP under various evaluation criteria.

\vskip0.3cm {\bf Key words.}
Mean-variance analysis;
Parameter estimation;
The average profitability of risky asset;
The current profitability of risky asset.

\end{abstract}

\renewcommand{\thefootnote}{}
\footnotetext{Abbreviation statement: the average profitability of the risky asset (AP); the current profitability of the risky asset (CP); maximum likelihood estimation (MLE)}


\newpage
\section{Introduction}

The portfolio selection problem is the study of seeking an allocation of wealth among different risky assets in order to increase returns and diversify risks.
The mean-variance analysis proposed by \cite{Markowitz1952Portfolio} describes the price of risky assets as random variables, 
in which the expectation and variance of the portfolio are considered as investment return and investment risk, respectively.
Originally cast in a single-period framework, the mean-variance analysis has undoubtedly also inspired the development of the multi-period model \citep{Li2000Optimal} and the continuous-time model \citep{Zhou2000Continuous}.

In the continuous-time model, portfolio selections with mean-variance criterion are always considered under the assumption that the price of risky asset follows Geometric Brownian Motion (GBM) \citep{Black1973Pricing}.
And different assumptions about GBM are developed.
For example, \cite{Zhou2000Continuous} considers that the return rate and volatility of risky asset are constant.
\cite{Zhou2002Continuous} assumes the return rate and volatility in GBM are switched in limited states.
\cite{Lim2002Meanvariance} extends the assumption about the return rate and volatility of risky asset to bounded stochastic processes.

The traditional paradigm for implementing the mean-variance portfolio selection follows ``separation principle", which separates the steps between estimation and optimization.
In the first step, the model parameters are estimated from historical data using statistical methods. 
In the second step, the estimated model is taken as given, and the optimization of mean-variance analysis is focused on. 
Due to the time-inconsistency of mean-variance operator, the portfolio selection is always considered under the pre-commitment condition \citep{Zhou2000Continuous, Li2002Dynamic, Bielecki2005Continuous, Lim2002Meanvariance, Zhou2003Markowitzs}.
However, the $t$-time pre-commitment mean-variance portfolio selection is related to the return rate and volatility of risky asset in the future, which are difficult to estimate accurately. 
Even worse, the portfolio selection is always sensitive to model parameters \citep{MJ1991Sensitivity}. 
As a result, estimation errors may be significantly amplified, rendering the investment result irrelevant to investment target.

A simple way to estimate model parameters in the future is to assume that the return rate and volatility of risky asset are constant.
When implementing the portfolio selection, at each executable time point, the value of parameters are estimated with historical data
and the estimations are considered to remain the same values during the whole planning investment horizon.
However, in the financial market, volatility smile \citep{Rubinstein1994Implied} and heavy tails of the return distribution \citep{Mandelbrot1963variation} are increasingly widely observed.
These phenomena imply that the financial market is most likely non-stationary, defeating the constant assumption of the model parameters.

Over the past few decades, various stochastic volatility models have been developed based on GBM to describe the process of risky asset,
e.g., \cite{Heston1993closed}, \cite{Hull1987pricing}, \cite{Lewis2000Option} and \cite{Stein1991Stock}.
A stochastic volatility model consists of the stochastic process of risky asset price and the stochastic process of exogenous variable, in which the exogenous variable influences the return rate and volatility of risky asset.
For example, \cite{Yan2019Open} considers the mean-variance portfolio selection under the Heston model.
It assumes the return rate and volatility of risky asset in the future will follow a specific process, and all model parameters are estimated with historical data.
\cite{Zhang2021Dynamic} considers the mean-variance portfolio selection under the 3/2 stochastic volatility model.
And, the processes of risky asset price and exogenous variable are assumed to be affected by the same standard Brownian motion.
However, a model misspecification may exist, and the exogenous variable is often difficult to select or observe in real financial market.

The first contribution of this paper is the theoretical development of two indexes in optimization, which present the average profitability of the risky asset (AP) and the current profitability of the risky asset (CP).
Without assuming any knowledge about the underlying forms of the return rate and volatility of risky asset, the future information of financial market can be separated from the portfolio selection by AP.
And, the $t$-time optimal pre-commitment mean-variance portfolio selection can be represented by AP during the whole planning investment horizon and CP at current time $t$. 
We find that, as AP increases, the pre-commitment mean-variance portfolio selection will invest more in risky asset with exponential growth.

In parameter estimation, we focus on AP and CP instead of the return rate and volatility of risky asset.
For AP, we propose an auxiliary wealth process which corresponds to time-consistent mean-variance portfolio selection.
We prove that AP is equivalent to the second-order variation of the auxiliary wealth process which can be estimated by the second-order variation in the past.
Compared with the traditional method, which roughly considers the return rate and volatility of risky asset are constant and calculates AP with the square of risk premium, our estimation is more accurate. 
In fact, our estimation not only considers the time dependence characteristics of risk premium but also implicitly modifies the estimation of AP with the true value of volatility of risky asset in the calculation of second-order variation.
For CP, we approximate CP with the $t$-time estimation of AP.
The variance of our estimation of CP is less than that of the traditional method, which calculates CP with the maximum likelihood estimation (MLE) of the return rate and volatility of risky asset.

We design an algorithm to implement the pre-commitment mean-variance portfolio selection with the estimation of AP and CP. 
In various simulated and real financial markets, the superiority of our portfolio selection over the traditional estimating is confirmed.
Further, we compare our portfolio selection with the so-called ``buy-and-hold'' strategy, which invests all the money into the risky asset at the initial time 0 and makes no adjustments for optimization in $[0,T]$.
Our algorithm also performs better in nearly all financial markets under various evaluation criteria.

In summary, the main contributions of this paper are threefold.

1) We propose two indexes, which present AP and CP.

2) Through AP and CP, we separate the unknown future information of financial market from the pre-commitment mean-variance portfolio selection. 

3) We estimate AP with the second-order variation of an auxiliary wealth process, which is more accurate than MLE.
And CP is approximated by the estimation of AP, which reduces estimating volatility for CP.

The remainder of this paper is organized as follows.
In Section \ref{section:2}, we formulate the mean-variance portfolio selection problem under GBM with time-varying return rate and volatility of risky asset and derive the optimal pre-commitment mean-variance portfolio selection.
The indexes AP and CP are introduced in Section \ref{section:3}.
In Section \ref{section:4}, an algorithm is designed for our portfolio selection with the estimations of AP and CP.
In Section \ref{section:5}, our portfolio selection is implemented in various simulated and real financial markets.
Finally, we conclude in Section \ref{section:6}.

\section{Problem Formulation}\label{section:2}

In this section, we formulate the continuous-time mean-variance portfolio selection problem.
Because of the non-smoothness of the variance operator, the iterated-expectations property for mean-variance objective fails, 
and the optimal portfolio selection is given under the pre-commitment condition \citep{Li2000Optimal} .

\subsection{Continuous-Time Mean-Variance Problem}

To simplify the discussion, we consider the portfolio selection consisting of one risky asset (a well-diversified stock index) and one risk-free asset (a bond). 
This simplification allows us to focus on the primary investment issue of the risky versus risk-free mix of the portfolio, instead of secondary issues such as the composition of risky asset basket \citep{Staden2021Surprising}.

Let the planning investment horizon $[0,T]$ be fixed.
$\{B_t,0 \leqslant t \leqslant T\}$ is a standard one-dimensional Brownian motion defined on a filtered probability space $(\Omega,\mathcal{F},\{\mathcal{F}_t\}_{0 \leqslant t \leqslant T},\mathbb{P})$.
Let $r$ denote the risk-free interest rate.
Then, the discounted price of risky asset is observable, whose dynamic is governed by stochastic differential equation:
\begin{equation}\label{equ:S}
	dS_t=(\mu(t)-r) S_tdt+\sigma(t) S_tdB_t, \quad S_0=s^o,
\end{equation}
where the instantaneous return rate $\mu(t)$ and volatility $\sigma(t)$ are both time-varying, $0\leqslant t \leqslant T$.
We consider $\mu(t)$ and $\sigma(t)$ are deterministic and continuous functions of ~$t$.
However, the exact forms of $\mu(\cdot)$ and $\sigma(\cdot)$ are unknown to the investor.

Denote the investor's action by $\{\theta_t\}_{0 \leqslant t \leqslant T}$, with $\theta_t$ representing the discounted amount invested in risky asset at time $t$.
Under the self-financing assumption, the process of discounted wealth $\{W_t\}_{0 \leqslant t \leqslant T}$ can be shown as
\begin{equation}\label{equ:W_CL}
	dW_t=\theta_t(\mu(t)-r) dt +\theta_t\sigma(t) dB_t, \quad  W_0=w^o,
\end{equation}
when the financial market allows short-selling and leverage without extra cost.

The continuous-time mean-variance problem \citep{Zhang2018Portfolio} aims to consider the portfolio selection under a trade-off between the expectation and variance of the discounted terminal wealth $W_T$:
\begin{equation}\label{model:2}
	\max_{\{\theta_t\}}\quad {\rm E}\Big(W_T\Big)-\gamma {\rm Var}\Big(W_T\Big),
\end{equation}
where $\{W_t\}_{0 \leqslant t \leqslant T}$ satisfies the process \eqref{equ:W_CL} under portfolio selection $\{\theta_t\}_{0 \leqslant t \leqslant T}$.
And, $\gamma ~(\gamma>0)$ is the risk aversion coefficient, 
which is the weight between investment return and investment return.

\subsection{The Optimal Mean-Variance Portfolio Selection}

Because the variance operator is non-smooth, i.e.,
\begin{equation*}
	{\rm Var}_s \Big({\rm Var}_t \Big(\cdot\Big)\Big)\neq {\rm Var}_s\Big(\cdot\Big), \quad 0\leqslant s<t\leqslant T,
\end{equation*}
problem \eqref{model:2} is known to be time inconsistent, and the dynamic programming principle \citep{Bellman1957Dynamic} fails.
We focus on the pre-commitment portfolio selection which decides the portfolio selection $\{\theta_t\}_{0 \leqslant t \leqslant T}$ at initial time 0 and commits not to deviate from it.
Following \cite{Zhou2000Continuous}, problem \eqref{model:2} can be embedded into a tractable auxiliary problem,
and Lemma \ref{th:aux} is obtained.

\begin{lemma}[\cite{Zhou2000Continuous}]\label{th:aux}
Suppose $\{\theta^*(t)\}_{0 \leqslant t \leqslant T}$~ is the optimal discounted portfolio selection in problem \eqref{model:2}. 
Then, $\{\theta^*(t)\}_{0 \leqslant t \leqslant T}$ is also the optimal discounted portfolio selection in tractable problem \eqref{model:3}:
\begin{equation}\label{model:3}
	\max_{\{\theta_t\}}\quad {\rm E}\Big(-\gamma W_T^2+\tau W_T\Big),
\end{equation}
with the process \eqref{equ:W_CL},
in which $\tau=1+2\gamma {\rm E}\Big(W_T^*\Big)$ and $W_T^*$ is the discounted wealth at terminal time $T$ with the optimal discounted portfolio selection.
\end{lemma}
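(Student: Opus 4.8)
The plan is to reduce the statement to one algebraic identity relating the objective of the linear--quadratic problem~\eqref{model:3} to the mean--variance objective of~\eqref{model:2}, and then to read off optimality of $\theta^*$ directly from that identity, with no contradiction argument needed.

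First I would fix an arbitrary admissible $\{\theta_t\}_{0\leqslant t\leqslant T}$ with discounted terminal wealth $W_T$ and write $m={\rm E}(W_T)$, $v={\rm Var}(W_T)$. Admissibility forces $W_T\in L^2$, hence ${\rm E}(W_T^2)=v+m^2$; substituting this into the objective of~\eqref{model:3}, inserting $\tau=1+2\gamma d^*$ with $d^*={\rm E}(W_T^*)$, and completing the square in $m$ gives
\begin{equation}\label{eq:lq-mv-identity}
{\rm E}\Big(-\gamma W_T^2+\tau W_T\Big)=\big(m-\gamma v\big)-\gamma\big(m-d^*\big)^2+\gamma(d^*)^2 .
\end{equation}
Since $m-\gamma v={\rm E}(W_T)-\gamma{\rm Var}(W_T)$, this says the value of~\eqref{model:3} at $\theta$ equals the value of~\eqref{model:2} at $\theta$, minus the nonnegative penalty $\gamma(m-d^*)^2$, plus the constant $\gamma(d^*)^2$, which does not depend on $\theta$.

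Then I would use~\eqref{eq:lq-mv-identity} in two ways. For an arbitrary admissible $\theta$, discarding the nonnegative penalty and invoking that $\theta^*$ maximizes ${\rm E}(W_T)-\gamma{\rm Var}(W_T)$ over all admissible portfolios --- which is exactly its optimality in~\eqref{model:2} --- bounds the \eqref{model:3}-objective at $\theta$ from above by $\big({\rm E}(W_T^*)-\gamma{\rm Var}(W_T^*)\big)+\gamma(d^*)^2$. Evaluating~\eqref{eq:lq-mv-identity} at $\theta=\theta^*$ shows this upper bound is attained there, because ${\rm E}(W_T^*)=d^*$ makes the penalty vanish. Hence $\theta^*$ maximizes the objective of~\eqref{model:3}, which is the assertion.

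I do not expect a genuine obstacle; the care needed is purely in the bookkeeping. I would make explicit that $\tau$ is a fixed scalar, pinned down by the already-chosen optimizer $\theta^*$, so that~\eqref{model:3} is a genuine linear--quadratic control problem and the phrase ``optimal portfolio selection in~\eqref{model:3}'' is meaningful. I would also state which integrability the admissible class carries, since that is what legitimizes the decomposition ${\rm E}(W_T^2)={\rm Var}(W_T)+({\rm E}(W_T))^2$ and the finiteness of $d^*$ and of $\gamma(d^*)^2$; in particular the hypothesis already presupposes $W_T^*\in L^2$. Finally I would note that the lemma asserts only the implication ``optimal for~\eqref{model:2} $\Longrightarrow$ optimal for~\eqref{model:3}''; turning this into a constructive recipe for the mean--variance solution would in addition require solving~\eqref{model:3} for each scalar parameter and then closing the fixed-point relation $\tau=1+2\gamma{\rm E}(W_T^*)$, but that converse is outside the present statement.
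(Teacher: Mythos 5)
Your argument is correct. Note that the paper itself does not prove this lemma --- it is quoted from \cite{Zhou2000Continuous}, where the embedding result is established by a contradiction argument: one observes that the mean--variance value is $u(x,y)=x-\gamma y+\gamma x^2$ with $x={\rm E}(W_T)$, $y={\rm E}(W_T^2)$, uses convexity of $u$ in $x$ to get $u(x,y)-u(x^*,y^*)\geqslant (1+2\gamma x^*)(x-x^*)-\gamma(y-y^*)$, and concludes that any control beating $\theta^*$ in the auxiliary problem would beat it in the mean--variance problem, contradicting optimality. Your completion-of-square identity
\[
{\rm E}\big(-\gamma W_T^2+\tau W_T\big)=\big(m-\gamma v\big)-\gamma\big(m-d^*\big)^2+\gamma (d^*)^2
\]
is exactly the same quadratic expansion made exact (for a quadratic the first-order convexity bound holds with an explicit square remainder), so it delivers the conclusion directly, without contradiction: drop the nonpositive term $-\gamma(m-d^*)^2$, apply optimality of $\theta^*$ in \eqref{model:2}, and observe the bound is attained at $\theta^*$ because $m^*=d^*$. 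Your bookkeeping points are also the right ones: $\tau$ is a fixed scalar determined by the chosen optimizer, admissibility must give $W_T\in L^2$ so that ${\rm E}(W_T^2)={\rm Var}(W_T)+({\rm E}(W_T))^2$ and $d^*$ are finite, and the lemma is only the one-directional implication (the converse, recovering the mean--variance solution by solving \eqref{model:3} for each $\tau$ and closing the fixed point $\tau=1+2\gamma{\rm E}(W_T^*)$, is what \cite{Zhou2000Continuous} does afterwards and is not claimed here). In short, same underlying idea as the cited source, packaged as a direct one-line identity rather than a convexity-plus-contradiction argument.
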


According to Lemma \ref{th:aux}, we focus on the tractable problem \eqref{model:3} on which the principle of dynamic programming can be applied, and the optimal discounted portfolio selection is explicitly represented in Lemma \ref{le:theta}.

\begin{lemma}[\cite{Zhou2000Continuous}]\label{le:theta}
The optimal discounted mean-variance portfolio selection of problem \eqref{model:3} is given by
\begin{equation}\label{equ:theta0}
	\theta^*_t
	=(-w+w^o+\dfrac{e^{\int_0^T (\frac{\mu(s)-r}{\sigma(s)})^2 ds}}{2\gamma})\dfrac{\mu(t)-r}{\sigma^2(t)}, \quad 0\leqslant t \leqslant T.
\end{equation}
\end{lemma}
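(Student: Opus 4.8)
The plan is to treat the auxiliary problem \eqref{model:3} as an unconstrained linear--quadratic stochastic control problem in which the constant $\tau$ is \emph{first} frozen as a free parameter, solve it by dynamic programming, and \emph{then} pin $\tau$ down through the self-consistency relation $\tau=1+2\gamma\,{\rm E}(W_T^*)$ that defines it. By Lemma~\ref{th:aux} this also settles problem \eqref{model:2}.

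First I would write the Hamilton--Jacobi--Bellman equation for the value function $V(t,w)=\sup_{\{\theta_s\}}{\rm E}\!\left(-\gamma W_T^2+\tau W_T\mid W_t=w\right)$ associated with the controlled dynamics \eqref{equ:W_CL},
\begin{equation*}
 V_t+\sup_{\theta}\Big\{\theta\,(\mu(t)-r)\,V_w+\tfrac12\theta^2\sigma^2(t)\,V_{ww}\Big\}=0,\qquad V(T,w)=-\gamma w^2+\tau w .
\end{equation*}
Because the terminal reward is concave in $w$, I would try a concave quadratic ansatz $V(t,w)=P(t)w^2+Q(t)w+R(t)$ with $P(T)=-\gamma$, $Q(T)=\tau$, $R(T)=0$. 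The inner maximization is then well posed ($V_{ww}=2P(t)<0$) and yields the feedback law
\begin{equation*}
 \theta^*_t=-\frac{\mu(t)-r}{\sigma^2(t)}\cdot\frac{V_w(t,W_t)}{V_{ww}(t,W_t)}=-\frac{\mu(t)-r}{\sigma^2(t)}\Big(W_t+\frac{Q(t)}{2P(t)}\Big).
\end{equation*}

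Next I would substitute the ansatz into the HJB equation and match powers of $w$. Writing $\rho(t)=\big((\mu(t)-r)/\sigma(t)\big)^2$, the $w^2$ and $w^1$ coefficients produce the linear ODEs $P'(t)=\rho(t)P(t)$ and $Q'(t)=\rho(t)Q(t)$, hence $P(t)=-\gamma\,e^{-\int_t^T\rho(s)ds}$ and $Q(t)=\tau\,e^{-\int_t^T\rho(s)ds}$ (the scalar ODE for $R$ plays no role in the control). Therefore $Q(t)/(2P(t))=-\tau/(2\gamma)$ for all $t$, so $\theta^*_t=\frac{\mu(t)-r}{\sigma^2(t)}\big(\tfrac{\tau}{2\gamma}-W_t\big)$. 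To identify $\tau$, I would take expectations in \eqref{equ:W_CL} under $\theta^*$: with $m(t)={\rm E}(W_t)$ this gives $m'(t)=\rho(t)\big(\tfrac{\tau}{2\gamma}-m(t)\big)$, $m(0)=w^o$, so $m(t)=\tfrac{\tau}{2\gamma}-\big(\tfrac{\tau}{2\gamma}-w^o\big)e^{-\int_0^t\rho(s)ds}$. Imposing $\tau=1+2\gamma\,m(T)$ collapses to $\big(\tau-2\gamma w^o\big)e^{-\int_0^T\rho(s)ds}=1$, i.e. $\tfrac{\tau}{2\gamma}=w^o+\tfrac{1}{2\gamma}e^{\int_0^T\rho(s)ds}$; substituting this back into the feedback law gives exactly \eqref{equ:theta0} (with $w=W_t$).

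The step that really needs care is the verification argument: one must confirm that the quadratic $V$ is a genuine solution of the HJB equation with adequate growth, that the candidate $\theta^*$ is admissible (square-integrable on $[0,T]$, which follows from continuity of $\mu,\sigma$ and boundedness of $m$), and that the local-martingale part of $t\mapsto V(t,W_t^{\theta})$ is a true martingale, so that $V$ indeed coincides with the optimal value and $\theta^*$ with the optimizer. A secondary subtlety is the fixed-point nature of $\tau$: the computation is legitimate only because the resulting $\tau$ is finite and induces an admissible control, which should be recorded. Since this is a restatement of the classical result of \cite{Zhou2000Continuous}, the write-up can be kept concise.
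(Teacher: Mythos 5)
Your derivation is correct: the quadratic ansatz gives $P(t)=-\gamma e^{-\int_t^T\rho(s)ds}$, $Q(t)=\tau e^{-\int_t^T\rho(s)ds}$, hence the feedback $\theta^*_t=\frac{\mu(t)-r}{\sigma^2(t)}\big(\tfrac{\tau}{2\gamma}-W_t\big)$, and the fixed-point condition $\tau=1+2\gamma\,{\rm E}(W_T^*)$ indeed yields $\tfrac{\tau}{2\gamma}=w^o+\tfrac{1}{2\gamma}e^{\int_0^T\rho(s)ds}$, which reproduces \eqref{equ:theta0} with $w=W_t$. Note that the paper does not prove this lemma at all --- it is quoted from \cite{Zhou2000Continuous} --- and your HJB-plus-verification argument is essentially the standard route there (the stochastic LQ/Riccati machinery of that reference reduces, for deterministic continuous $\mu(\cdot),\sigma(\cdot)$, to exactly your quadratic value function and ODEs), so apart from writing out the verification step and admissibility carefully, nothing further is needed.
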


Some remarks are in order.
%
%
$\frac{\mu(t)-r}{\sigma(t)}$ is the risk premium which presents the return per unit risk of the risky asset at time $t$.
The optimal portfolio selection is not only related to the risk premium at current time $t$, but also related to the risk premium during the whole planning investment horizon $\int_0^T (\frac{\mu(s)-r}{\sigma(s)})^2 ds$. 
In \eqref{equ:theta0}, the risk premium in the past and future appears as a whole part in the index based on the natural number $e$.
Whenever the risk premium increases, the investor will invest more money in the risky asset.

When considering the financial market with constant return rate and volatility, the optimal discounted pre-commitment portfolio selection is simplified in Lemma \ref{le:constant}.

\begin{lemma}\label{le:constant}
When considering the constant return rate and volatility of risky asset, i.e., $\mu(t)\equiv\mu$, $\sigma(t)\equiv\sigma$, $\forall t\in[0,T]$, the optimal discounted mean-variance portfolio selection \eqref{equ:theta0} can be simplified into
\begin{equation}\label{equ:theta_real_time}
	\theta^*_t
	=(-w+w^o+\dfrac{e^{(\frac{\mu-r}{\sigma})^2\cdot T}}{2\gamma})\dfrac{\mu-r}{\sigma^2}.
\end{equation}
\end{lemma}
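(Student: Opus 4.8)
The plan is to specialize the general formula \eqref{equ:theta0} of Lemma \ref{le:theta} to the case of constant coefficients; no new machinery is needed, since Lemma \ref{le:theta} already furnishes the closed form of the optimal portfolio selection for arbitrary deterministic continuous $\mu(\cdot)$ and $\sigma(\cdot)$.

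First I would substitute the hypothesis $\mu(t)\equiv\mu$ and $\sigma(t)\equiv\sigma$ into the integrand appearing in the exponent of \eqref{equ:theta0}. Since $\left(\frac{\mu(s)-r}{\sigma(s)}\right)^2=\left(\frac{\mu-r}{\sigma}\right)^2$ is now independent of $s$, the integral over $[0,T]$ evaluates to $\left(\frac{\mu-r}{\sigma}\right)^2\cdot T$, so the exponential factor $e^{\int_0^T(\frac{\mu(s)-r}{\sigma(s)})^2\,ds}$ becomes $e^{(\frac{\mu-r}{\sigma})^2\cdot T}$. Next I would replace the time-$t$ factor $\frac{\mu(t)-r}{\sigma^2(t)}$ by its constant value $\frac{\mu-r}{\sigma^2}$. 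Collecting these substitutions into \eqref{equ:theta0} yields exactly \eqref{equ:theta_real_time}.

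The only point that warrants a word of care — and the closest thing to an obstacle — is confirming that the constant functions $\mu(t)\equiv\mu$ and $\sigma(t)\equiv\sigma$ indeed fall within the admissible class assumed for Lemma \ref{le:theta}, namely deterministic continuous functions of $t$ with $\sigma>0$; this is immediate, so Lemma \ref{le:theta} applies verbatim and the claimed simplification is valid. Hence the result follows.
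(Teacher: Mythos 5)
Your proposal is correct and matches the paper's treatment: the lemma is an immediate specialization of \eqref{equ:theta0}, obtained by evaluating the exponent integral as $(\frac{\mu-r}{\sigma})^2\cdot T$ and replacing $\frac{\mu(t)-r}{\sigma^2(t)}$ by $\frac{\mu-r}{\sigma^2}$, exactly as you describe. Nothing further is needed.
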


\section{The Average and Current Profitability of Risky Asset}\label{section:3}

When implementing the portfolio selection \eqref{equ:theta0},
the investor needs to estimate the model parameters $\mu(t), \sigma(t), \forall t\in[0,T]$ from the historical time series of risky asset price.
The traditional method is to estimate $\mu(t)$ and $\sigma(t)$ with maximum likelihood estimation (MLE) based on the process of discounted risky asset \eqref{equ:S}.
However, in this section, we propose two indexes which present the average profitability of risky asset (AP) and the current probability of risky asset (CP).
And, in the next section, we will show the superiority of the indexes when estimating them instead of $\mu(t), \sigma(t), \forall t\in[0,T]$.

In the process of discounted risky asset \eqref{equ:S}, the risk premium at time $t$ is $\frac{\mu(t)-r}{\sigma(t)}$, $\forall t\in[0,T]$.
If $\frac{\mu(t)-r}{\sigma(t)}\neq0$, the investor can profit from buying or shorting the risky asset. 
Denote by
\begin{equation}\label{equ:A}
	A(t)=\Big(\dfrac{\mu(t)-r}{\sigma(t)}\Big)^2
\end{equation}
CP at time $t$.
The larger value of $A(t)$, the more investor can profit from buying or shorting a share of risky asset.
Then, it follows that,
AP on $[t,T]$ can be denoted by
\begin{equation}\label{equ:K}
	\bold{K}(t,T)= \dfrac{1}{T-t}\int_t^T A(s) ds.
\end{equation}
In this way, the optimal discounted pre-commitment portfolio selection \eqref{equ:theta0} can be represented by AP and CP, which is rewritten as 
\begin{equation}\label{equ:theta1}
	\theta^*_t
	=(-w+w^o+\dfrac{e^{\bold{K}(0,T)\cdot T}}{2\gamma})\dfrac{\sqrt{A(t)}}{\sigma(t)}, \quad 0\leqslant t \leqslant T.
\end{equation}

Thus, at time $t$, what the investor needs to estimate for implementing the portfolio selection \eqref{equ:theta1} is the volatility of risky asset $\sigma(t)$, AP $\bold{K}(0,T)$ and CP $A(t)$.
Then, we introduce a property of $\bold{K}(t,T)$ in Theorem \ref{th:K}, which is important for developing the estimation method of $\bold{K}(0,T)$ in next section.

\begin{theorem}\label{th:K}
Let $\{\widetilde{W}_s\}_{t \leqslant s \leqslant T}$ be the process of discounted wealth under the price of risky asset \eqref{equ:S} and an auxiliary portfolio selection
\begin{equation}\label{equ:aux_theta}
	\widetilde{\theta}_s=\dfrac{\mu(s)-r}{\sigma^2(s)}, \qquad t\leqslant s\leqslant T.
\end{equation}
Then,
\begin{equation}\label{equ:se_order}
	\bold{K}(t,T) =\dfrac{1}{T-t}\Big([\widetilde{W},\widetilde{W}]_{T}-[\widetilde{W},\widetilde{W}]_{t}\Big),
\end{equation}
in which 
$[\widetilde{W},\widetilde{W}]_t$ denotes the second-order variation of $\{\widetilde{W}_s\}$ from time $0$ to $t$, $\forall t\in[0,T]$.
\end{theorem}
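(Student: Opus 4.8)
The plan is to write the auxiliary wealth process $\{\widetilde{W}_s\}$ explicitly as an It\^o process and read off its quadratic variation directly. First I would substitute the auxiliary portfolio selection \eqref{equ:aux_theta} into the wealth dynamics \eqref{equ:W_CL}: since $\widetilde{\theta}_s=(\mu(s)-r)/\sigma^2(s)$, we obtain
\begin{equation*}
d\widetilde{W}_s=\frac{(\mu(s)-r)^2}{\sigma^2(s)}\,ds+\frac{\mu(s)-r}{\sigma(s)}\,dB_s,\qquad t\leqslant s\leqslant T,
\end{equation*}
which displays $\widetilde{W}$ as the sum of an absolutely continuous (finite-variation) drift part and a stochastic integral against $B$.

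Next I would invoke the standard fact that the quadratic variation of an It\^o process is determined solely by its diffusion coefficient: the $ds$-term is of finite variation and contributes nothing to $[\widetilde{W},\widetilde{W}]$, so
\begin{equation*}
d[\widetilde{W},\widetilde{W}]_s=\Big(\frac{\mu(s)-r}{\sigma(s)}\Big)^2 ds=A(s)\,ds,
\end{equation*}
using the definition \eqref{equ:A} of $A$. Integrating from $t$ to $T$ gives $[\widetilde{W},\widetilde{W}]_T-[\widetilde{W},\widetilde{W}]_t=\int_t^T A(s)\,ds$, and dividing by $T-t$ and comparing with the definition \eqref{equ:K} of $\bold{K}(t,T)$ yields \eqref{equ:se_order}. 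Note that the lower endpoint in $[\widetilde{W},\widetilde{W}]_t$ (measured from time $0$) is immaterial, since only the increment of the quadratic variation over $[t,T]$ enters the identity.

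As for obstacles, there is essentially no deep difficulty: the only point requiring a word of justification is that the absolutely continuous drift contributes nothing to the quadratic variation, which holds because a continuous finite-variation process has zero quadratic variation and its covariation with any continuous semimartingale vanishes (e.g., via the Kunita--Watanabe inequality). A secondary point worth recording is that $\mu(\cdot)$ and $\sigma(\cdot)$ are continuous on $[0,T]$ and $\sigma(\cdot)$ is nonvanishing there, so $\widetilde{\theta}$ and $A$ are bounded and all the integrals above are well defined.
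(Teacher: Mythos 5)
Your proposal is correct and follows essentially the same route as the paper: both compute the quadratic variation of the auxiliary wealth process and use that it equals $\int (\widetilde{\theta}_s\sigma(s))^2\,ds=\int A(s)\,ds$, the drift making no contribution. Your write-up is in fact slightly more explicit than the paper's (you display the SDE for $\widetilde{W}$ and justify discarding the finite-variation part), but the underlying argument is identical.
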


\begin{proof}
Choose a partition $\Pi=\{t_0,\dots,t_k\}$ of $[0,t]$, which is a set of times $0=t_0 \leqslant \dots \leqslant t_k=t$. 
We do not require the partition points to be equally spaced, although they are allowed to be. 
The maximum step size of the partition is denoted by $||\Pi||=\max\limits_{i=0,\dots,k-1}(t_{i+1}-t_i)$. 
Then, the second-order variation of the discounted wealth process $\{\widetilde{W}_s\}$ from time $0$ to $t$ is
\begin{equation*}
	[\widetilde{W},\widetilde{W}]_t
	=\lim\limits_{||\Pi||\to0}\sum_{i=0}^{k-1} \Big(\widetilde{W}_{t_{i+1}}-\widetilde{W}_{t_i}\Big)^2.
\end{equation*}
Under the process of risky asset \eqref{equ:S} and the auxiliary portfolio selection \eqref{equ:aux_theta}, the second-order variation becomes
\begin{align*}
	[\widetilde{W},\widetilde{W}]_t
	&=\int_0^t(\widetilde{\theta}_s\sigma(s))^2ds\\
	&=\int_0^t\Big(\dfrac{\mu(s)-r}{\sigma(s)}\Big)^2ds.
\end{align*}
Thus, for $ 0\leqslant t \leqslant T$,
\begin{equation*}
	\bold{K}(t,T) =\dfrac{1}{T-t}\Big([\widetilde{W},\widetilde{W}]_{T}-[\widetilde{W},\widetilde{W}]_{t}\Big).
\end{equation*}
\end{proof}

In fact, \eqref{equ:aux_theta} is the optimal discounted time-consistent mean-variance portfolio selection under the criterion of conditional expectation and variance, i.e.,
\begin{equation}\label{equ:tc}
	{\rm E}_t\Big(W_T\Big)- \dfrac{1}{2} {\rm Var}_t\Big(W_T\Big),
\end{equation}
\citep{Basak2010Dynamic}.
This criterion can yield an analytical equilibrium strategy which is only relevant to the parameters at current time $t$.
Specifically, in \eqref{equ:aux_theta}, the optimal discounted time-consistent mean-variance portfolio selection is relevant to CP.
Thus, the second-order variation on the right side of \eqref{equ:se_order} presents AP which is the benefit of the auxiliary wealth process \eqref{equ:aux_theta}.

\section{Algorithm Design}\label{section:4}

After denoting AP and CP, we now design an algorithm to implement the optimal pre-commitment portfolio selection \eqref{equ:theta1}, without assuming any knowledge about the underlying forms of the return rate and volatility of risky asset $\mu(\cdot)$ and $\sigma(\cdot)$.
In this section, we first summarize the pseudocode for the optimal discounted pre-commitment portfolio selection in Algorithm \ref{alg:para(t)}, 
and then detail the estimations AP $\bold{K}(0,T)$ and CP $A(t)$ in Section \ref{se:K} and \ref{se:A}. 
The superiority of the estimation of $\bold{K}(0,T)$ and $A(t)$ are shown theoretically and numerically, respectively.

To describe the algorithm, we first discretize $[0,T]$ into small equal-length intervals $[t_k,t_{k+1}]$, $k = 0, 1, \dots, N-1$, where $t_0 = 0$ and $t_N = T$.
It is assumed that the price of risky asset can be observed and the portfolio selection can be implemented only at time $t_k$.
At each executable moment $t_k$, Algorithm \ref{alg:para(t)} consists of three successive estimation procedures:
the estimation of volatility of risky asset $\sigma(t_k)$, the estimation of AP $\bold{K}(0,T)$ and the estimation of CP $A(t_k)$.

The estimation of $\sigma(t_k)$ is obtained by traditional MLE \citep{Campbell1996Econometrics},
and the estimation of $\mu(t_k)$ is obtained simultaneously:
\begin{equation}\label{equ:estimate_mu_sigma}
\begin{split}
	&\hat{\mu}(t_k)=\hat{\alpha}+r+\dfrac{\hat{\beta}}{2},\\
	&\hat{\sigma}(t_k)=\sqrt{\hat{\beta}},
\end{split}
\end{equation}
in which
\begin{align*}
	&\hat{\alpha}
	=\dfrac{1}{\Delta t}\dfrac{1}{M+1}\sum_{i=(k-1)-M}^{k-1}\ln{\dfrac{S_{t_{i+1}}}{S_{t_i}}},\\
	&\hat{\beta}
	=\dfrac{1}{\Delta t}\dfrac{1}{M+1}\sum_{i=(k-1)-M}^{k-1}(\ln{\dfrac{S_{t_{i+1}}}{S_{t_i}}}-\hat{\alpha}\Delta t)^2.
\end{align*}

\renewcommand{\algorithmicrequire}{\textbf{Input:}}
\renewcommand{\algorithmicensure}{\textbf{Output:}}
\begin{breakablealgorithm}\label{alg:para(t)}
\caption{The Optimal Pre-commitment Mean-variance Portfolio Selection}
\begin{algorithmic}[1]
	\FOR {$k=0:N-1$}{
		\STATE{Estimate $\sigma(t_k)$ with \eqref{equ:estimate_mu_sigma}.}
		\STATE{Estimate $\bold{K}(0,T)$ with \eqref{equ:estimate_K}.}
		\STATE{Estimate $A(t_k)$ with \eqref{equ:estimate_A}.}
		\STATE{Generate $\theta_{t_k}$ with the optimal discounted mean-variance portfolio selection \eqref{equ:theta1}.}
		\STATE{At next executable moment $t_{k+1}$, obtain the discounted wealth $W_{t_{k+1}}$ by
		\begin{equation*}
			W_{t_{k+1}}=W_{t_k}+\dfrac{S_{t_{k+1}}-S_{t_k}}{S_{t_k}}\theta_{t_k}.
		\end{equation*}}
	}
	\ENDFOR
\end{algorithmic}
\end{breakablealgorithm}

\subsection{The Estimation of AP}\label{se:K}

In this section, we detail the estimation of $\bold{K}(0,T)$ at time $t_k$.
With \eqref{equ:K}, we have 
\begin{equation*}
	\bold{K}(0,T)=\dfrac{\bold{K}(0,t_k)\cdot t_k+\bold{K}(t_k,T)\cdot(T-t_k)}{T}.
\end{equation*}
At time $t_k$, $\bold{K}(0,t_k)$ can be calculated with historical data, but $\bold{K}(t_k,T)$ needs to be estimated.
We consider that AP is period and is about equally in two equal-long periods near each other according to the cyclical pattern of economic activity \citep{Hamilton1989new}.
In this way, $\bold{K}(t_k,T)$ is estimated by $\bold{K}(t_k-(T-t_k), t_k)=\bold{K}(t_{k-(N-k)}, t_k)$.

Following Theorem \ref{th:K}, 
\begin{equation*}
	\bold{K}(t_i,t_j)=\lim\limits_{||\Pi||\to0}\sum_{s=i}^{j-1} \Big(\widetilde{W}_{t_{s+1}}-\widetilde{W}_{t_s}\Big)^2 \cdot \dfrac{1}{t_j-t_i},
	\qquad(t_i,t_j)\in\{(t_0,t_k),(t_{k-(N-k)}, t_k)\},
\end{equation*}
in which $\{\widetilde{W}_{t_s}\}_{i\leqslant s \leqslant j}$ is generated with the auxiliary portfolio selection \eqref{equ:aux_theta}.
Thus, $\bold{K}(0,T)$ can be estimated by
\begin{equation}\label{equ:estimate_K}
	\hat{\bold{K}}(0,T) 
	= \dfrac{1}{T} \Big(\sum_{i=0}^{k-1}+ \sum_{i=k-(N-k)}^{k-1}\Big) \Big(\widetilde{W}_{t_{i+1}}-\widetilde{W}_{t_i}\Big)^2.
\end{equation}
in which $\{\widetilde{W}_{t_s}\}_{i\leqslant s \leqslant j}$ is generated with
\begin{equation*}
	\widetilde{\theta}_{t_s}=\dfrac{\hat{\mu}(t_s)-r}{\hat{\sigma}^2(t_s)}.
\end{equation*}

In \eqref{equ:estimate_K}, $\hat{\bold{K}}(0,T)$ fully considers the time-dependence characteristics of risk premium.
In fact,
\begin{align}
	\hat{\bold{K}}(0,T) 
	&=\dfrac{1}{T} \Big(\sum_{i=0}^{k-1}+ \sum_{i=k-(N-k)}^{k-1}\Big) \Big(\widetilde{W}_{t_{i+1}}-\widetilde{W}_{t_i}\Big)^2 \notag\\
	&\approx \dfrac{1}{T} \Big(\sum_{i=0}^{k-1}+ \sum_{i=k-(N-k)}^{k-1}\Big) (\widetilde{\theta}_{t_i}\sigma(t_i))^2 \cdot \Delta t \notag\\
	&= \dfrac{1}{T} \Big(\sum_{i=0}^{k-1}+ \sum_{i=k-(N-k)}^{k-1}\Big) \Big(\dfrac{\hat{\mu}(t_i)-r}{\hat{\sigma}^2(t_i)}\Big)^2 \sigma^2(t_i) \cdot \Delta t \notag\\
	&= \dfrac{1}{N} \Big(\sum_{i=0}^{k-1}+ \sum_{i=k-(N-k)}^{k-1}\Big) \Big(\dfrac{\hat{\mu}(t_i)-r}{\hat{\sigma}(t_i)}\Big)^2 \dfrac{\sigma^2(t_i)}{\hat{\sigma}^2(t_i)}\label{equ:ss},
\end{align}
which implicitly includes the true value of volatility of risky asset $\sigma(t_i)$.
%
%
We show in Theorem \ref{th:relative_error} that, for $(\frac{\mu(t_k)-r}{\sigma(t_k)})^2$, $(\frac{\hat{\mu}(t_k)-r}{\hat{\sigma}(t_k)})^2 \frac{\sigma^2(t_k)}{\hat{\sigma}^2(t_k)}$ in \eqref{equ:ss} is more exactly than $(\frac{\hat{\mu}(t_k)-r}{\hat{\sigma}(t_k)})^2$ with lower absolute error.

\begin{theorem}\label{th:relative_error}
Denote the true value of return rate and volatility of risky asset at time $t_k$ as $\mu(t_k), \sigma(t_k)$, and their estimated value as $\hat{\mu}(t_k), \hat{\sigma}(t_k)$.
Consider the assumption that $\sigma^2(t_k)<\hat{\sigma}^2(t_k)$ and $2(\frac{\mu(t_k)-r}{\sigma(t_k)})^2<(\frac{\hat{\mu}(t_k)-r}{\hat{\sigma}(t_k)})^2$,
we have
\begin{equation*}
	\Big|(\frac{\hat{\mu}(t_k)-r}{\hat{\sigma}(t_k)})^2 \frac{\sigma^2(t_k)}{\hat{\sigma}^2(t_k)}-(\frac{\mu(t_k)-r}{\sigma(t_k)})^2\Big|
	<\Big|(\frac{\hat{\mu}(t_k)-r}{\hat{\sigma}(t_k)})^2-(\frac{\mu(t_k)-r}{\sigma(t_k)})^2\Big|.
\end{equation*}
\end{theorem}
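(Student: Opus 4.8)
The plan is to strip the statement down to a purely algebraic inequality by renaming the three quantities involved. Write $a=\bigl(\tfrac{\hat\mu(t_k)-r}{\hat\sigma(t_k)}\bigr)^2$ for the estimated squared risk premium, $b=\bigl(\tfrac{\mu(t_k)-r}{\sigma(t_k)}\bigr)^2$ for the true one, and $c=\tfrac{\sigma^2(t_k)}{\hat\sigma^2(t_k)}$ for the variance ratio, so the quantity to be controlled is $|ac-b|$ and the benchmark error is $|a-b|$. Since $\sigma^2(t_k)$ and $\hat\sigma^2(t_k)$ are both positive, $c>0$; the hypothesis $\sigma^2(t_k)<\hat\sigma^2(t_k)$ is exactly $0<c<1$, and the hypothesis $2\bigl(\tfrac{\mu(t_k)-r}{\sigma(t_k)}\bigr)^2<\bigl(\tfrac{\hat\mu(t_k)-r}{\hat\sigma(t_k)}\bigr)^2$ is exactly $a>2b$. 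Because $b\ge 0$ (it is a square), $a>2b$ already forces $a>0$ and $a-b>b\ge 0$; in particular $|a-b|=a-b>0$, so it suffices to prove the two-sided inequality $-(a-b)<ac-b<a-b$.

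For the upper bound, $ac-b<a-b$ holds if and only if $ac<a$, and since $a>0$ this is equivalent to $c<1$, which is the first hypothesis. For the lower bound, $ac-b>-(a-b)$ is equivalent to $a(c+1)>2b$; since $c>0$ gives $c+1>1$ and the second hypothesis gives $a>2b$, we obtain $a(c+1)>a>2b$, as required. Combining the two bounds gives $|ac-b|<a-b$, which is precisely the claimed inequality once $a$, $b$, $c$ are re-expanded into the original symbols; strictness survives because $c<1$ and $a>2b$ are strict.

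There is essentially no analytic obstacle here — the proof is a short chain of equivalences — so the only thing deserving care is sign bookkeeping: one must note that the hypotheses force $a>0$ and $a-b>0$, which legitimizes dividing the upper bound by $a$ and removing the absolute value on the right, and one should emphasize that the argument never needs to determine the sign of $ac-b$, since it is squeezed between $-(a-b)$ and $a-b$ directly. If desired, a closing sentence can translate the result back into the paper's language: the modified estimator $\bigl(\tfrac{\hat\mu(t_k)-r}{\hat\sigma(t_k)}\bigr)^2\tfrac{\sigma^2(t_k)}{\hat\sigma^2(t_k)}$ appearing in \eqref{equ:ss} is strictly closer to the target $A(t_k)=\bigl(\tfrac{\mu(t_k)-r}{\sigma(t_k)}\bigr)^2$ than the plain MLE-based quantity $\bigl(\tfrac{\hat\mu(t_k)-r}{\hat\sigma(t_k)}\bigr)^2$.
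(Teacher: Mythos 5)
Your proof is correct and follows essentially the same route as the paper's: both arguments come down to the two facts $c<1$ (from $\sigma^2(t_k)<\hat{\sigma}^2(t_k)$) and $a(c+1)>2b$ (from $a>2b$ and $c>0$), which the paper packages as a difference-of-squares factorization after dividing through by $b=\bigl(\tfrac{\mu(t_k)-r}{\sigma(t_k)}\bigr)^2$, while you present them as a two-sided squeeze $-(a-b)<ac-b<a-b$. A minor bonus of your formulation is that, by never dividing by $(\mu(t_k)-r)^2$, it also covers the degenerate case $\mu(t_k)=r$, which the paper's normalization implicitly excludes.
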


\begin{proof}
The assumption $2(\frac{\mu(t_k)-r}{\sigma(t_k)})^2<(\frac{\hat{\mu}(t_k)-r}{\hat{\sigma}(t_k)})^2$ implies
$(\frac{\hat{\mu}(t_k)-r}{\hat{\sigma}(t_k)})^2 (\frac{\sigma(t_k)}{\mu(t_k)-r})^2>2$.
And
\begin{equation*}
	\Big(\dfrac{\hat{\mu}(t_k)-r}{\hat{\sigma}(t_k)}\Big)^2 \Big(\dfrac{\sigma(t_k)}{\mu(t_k)-r}\Big)^2
	\Big(\dfrac{\sigma^2(t_k)}{\hat{\sigma}^2(t_k)}+1\Big)
	>2\cdot 1
	>2.
\end{equation*}
Further, according to the assumption that $\sigma^2(t_k)<\hat{\sigma}^2(t_k)$, we have
\begin{equation*}
	\Big(\dfrac{\hat{\mu}(t_k)-r}{\hat{\sigma}(t_k)}\Big)^2 \Big(\dfrac{\sigma(t_k)}{\mu(t_k)-r}\Big)^2
	\Big(\Big(\dfrac{\hat{\mu}(t_k)-r}{\hat{\sigma}(t_k)}\Big)^2 \Big(\dfrac{\sigma(t_k)}{\mu(t_k)-r}\Big)^2
	\Big(\dfrac{\sigma^2(t_k)}{\hat{\sigma}^2(t_k)}+1\Big)-2\Big)
	\Big(\dfrac{\sigma^2(t_k)}{\hat{\sigma}^2(t_k)}-1\Big)
	<0.
\end{equation*}
Thus,
\begin{gather*}
	\Big(\Big(\dfrac{\hat{\mu}(t_k)-r}{\hat{\sigma}(t_k)}\Big)^2 \dfrac{\sigma^2(t_k)}{\hat{\sigma}^2(t_k)}\Big(\dfrac{\sigma(t_k)}{\mu(t_k)-r}\Big)^2-1\Big)^2
	< \Big(\Big(\dfrac{\hat{\mu}(t_k)-r}{\hat{\sigma}(t_k)}\Big)^2 \Big(\dfrac{\sigma(t_k)}{\mu(t_k)-r}\Big)^2-1\Big)^2\\
	\Big|\Big(\dfrac{\hat{\mu}(t_k)-r}{\hat{\sigma}(t_k)}\Big)^2 \dfrac{\sigma^2(t_k)}{\hat{\sigma}^2(t_k)}-\Big(\dfrac{\mu(t_k)-r}{\sigma(t_k)}\Big)^2\Big|
	<\Big|\dfrac{\Big(\hat{\mu}(t_k)-r}{\hat{\sigma}(t_k)}\Big)^2-\Big(\dfrac{\mu(t_k)-r}{\sigma(t_k)}\Big)^2\Big|.
\end{gather*}
\end{proof}

\subsection{The Estimation of CP}\label{se:A}

In this section, we detail the estimation of $A(t_k)$.
Due to the superiority of the estimation of $\bold{K}(0,T)$, we estimate $A(t_k)$ with the $t$-time estimation of $\bold{K}(0,T)$:
\begin{equation}\label{equ:estimate_A}
\begin{split}
	\hat{A}(t_k)	
	&=\hat{\bold{K}}(0,T)\\
	&= \dfrac{1}{T} \Big(\sum_{i=0}^{k-1} \Big(\widetilde{W}_{t_{i+1}}-\widetilde{W}_{t_i}\Big)^2 \cdot \Delta t
	+ \sum_{i=k-(N-k)}^{k-1} \Big(\widetilde{W}_{t_{i+1}}-\widetilde{W}_{t_i}\Big)^2 \cdot \Delta t\Big).
\end{split}
\end{equation}
It is well known that, in MLE, nothing is gained in terms of estimation accuracy of the return rate $\mu(t_k)$ by choosing finer observation intervals \citep{MERTON1980323}.
And, the estimation error in $\mu(t_k)$ tends to have great influence in portfolio selection \citep{MJ1991Sensitivity}.
In the following of this section, we numerically show that the estimation in \eqref{equ:estimate_A} does reduce the estimation standard deviation of $\sqrt{A(t_k)}$ when choosing finer observation intervals.

Given $\mu^*(t_k), \sigma^*(t_k)$ and $S_{t_k}$, we independently generate 10,000 samples for discounted price of risky asset at next observable time $\{S_{t_{k+1}}^i\}_{i=1}^{10000}$ with the process \eqref{equ:S}.
In this setting, the risk premium $\frac{\mu(t_k)-r}{\sigma(t_k)}$ which is presented in the optimal pre-commitment portfolio selection \eqref{equ:theta1} as $\sqrt{A(t_k)}$ can be estimated.
In Table \ref{table:mu}, two ways for estimating the risk premium at time $t_k$ are compared:

a -- Estimate the risk premium $\frac{\mu(t_k)-r}{\sigma(t_k)}$ with $\sqrt{\hat{A}(t_k)}$, in which $\hat{A}(t_k)$ is estimated with \eqref{equ:estimate_A}.

b -- Estimate the risk premium $\frac{\mu(t_k)-r}{\sigma(t_k)}$ with $\frac{\hat{\mu}(t_k)-r}{\hat{\sigma}(t_k)}$, in which $\hat{\mu}(t_k), \hat{\sigma}(t_k)$ are estimated with traditional MLE \eqref{equ:estimate_mu_sigma}.

Table \ref{table:mu} shows the estimation standard deviation of $\sqrt{A(t_k)}$ under different observation intervals $\Delta t$.
The price of risky asset is observed monthly, weekly and daily when $\Delta t=21/252, 12/252$ and $1/252$.
In Table \ref{table:mu}, we find that, as $\Delta t$ decreases, the estimation with $\sqrt{A(t_k)}$ in \eqref{equ:estimate_A} is more stable with lower standard deviation, 
while the traditional MLE of $\mu(t_k), \sigma(t_k)$ does not.

\setlength\LTcapwidth{\textwidth}
\begin{longtable}{p{3.2cm}|p{1.5cm}|p{1.5cm}|p{1.5cm}|p{1.5cm}|p{1.5cm}|p{1.5cm}}
\caption{The standard deviation of the estimation of risk premium.}\label{table:mu}\\ 
\toprule
\thead{$\sigma^*(t_k)\equiv0.1$}
&\multicolumn{2}{|c}{$\mu^*(t_k)\equiv0.08$}
&\multicolumn{2}{|c}{$\mu^*(t_k)\equiv0.1$}
&\multicolumn{2}{|c}{$\mu^*(t_k)\equiv0.12$}
  \\\hline        
standard deviation
&\thead{$a$} &\thead{$b$} 
&\thead{$a$} &\thead{$b$} 
&\thead{$a$} &\thead{$b$}
  \\\hline        
  \thead{$\Delta t=21/252$}
  &\thead{$1.0324$}
  &\thead{$1.1553$}
  &\thead{$1.1384$}
  &\thead{$1.1613$}
  &\thead{$1.2711$}
  &\thead{$1.1697$}
  \\\hline
  \thead{$\Delta t=12/252$}
  &\thead{$0.7508$}
  &\thead{$1.0803$}
  &\thead{$0.8266$}
  &\thead{$1.0834$}
  &\thead{$0.9169$}
  &\thead{$1.0877$}
  \\\hline
  \thead{$\Delta t=1/252$}
  &\thead{$0.5193$}
  &\thead{$1.0013$}
  &\thead{$0.5682$}
  &\thead{$1.0014$}
  &\thead{$0.6224$}
  &\thead{$1.0016$}
  \\
\bottomrule
\end{longtable}

\section{Numerical Study}\label{section:5}

We compare the performance of mean-variance portfolio selection A, B, N and T:

A -- The portfolio selection \eqref{equ:theta1} which is based on estimating AP according to the second-order variation of an auxiliary wealth process with \eqref{equ:estimate_K} and estimating CP with \eqref{equ:estimate_A}.

B -- The portfolio selection \eqref{equ:theta1} which is based on estimating AP with the consideration that the risk premium is constant, and estimating CP with the MLE of return rate and volatility of risky asset. 

N -- The so-called ``buy-and-hold'' strategy, which invests all the money into the risky asset at the initial time 0 and makes no adjustments for optimization in $[0,T]$.

T -- The portfolio selection \eqref{equ:theta1} with actual values of return rate and volatility of risky asset.

\noindent
The portfolio selection A, B, T are different in estimating the model parameters when implementing the analytical solution \eqref{equ:theta1}.
The portfolio selection A is the strategy shown in our automatio Algorithm \ref{alg:para(t)}.
The portfolio selection B is the strategy implemented with traditional MLE described in Section \ref{section:4}.
The portfolio selection T is the strategy with the true values of parameters $\mu(t), \sigma(t), \forall 0\leqslant t \leqslant T$. 
Portfolio selection T is expected to perform the best, although the true values of parameters are unknown in practice and the performance is not attainable.
The portfolio selection N displays the performance of the well-diversified stock index itself, which does not involve any optimization or estimation and is only shown as a counterpart.

We consider the risk aversion coefficient $\gamma=1.4$, the risk-free interest rate $r=0.02$, the planning investment horizon $T = 1$ year and the initial fund $w^o = 1$.
The planning investment horizon is discretized into $N=252$ sub-intervals, and the discretization length $\Delta t=\frac{1}{252}$ with each subinterval representing one trading day in financial market.
%
Each portfolio selection starts with an initial estimation of parameters and updates the estimation through time based on observed data. 
We set the window for estimating the parameters to be $M=252$ data points.
In the following of this section, we conduct comparisons of portfolio selection A, B, N and T in various simulation settings, 
including the GBM case where the return rate and volatility of risky asset are constant, 
the Heston case where the return rate and volatility of risky are influenced by an exogenous variable,
and the real financial market case.

\subsection{The Geometric Brownian Motion Case}\label{se:gbm}

In this section, the data of discounted risky asset price is generated by GBM with constant return rate and volatility
\begin{equation*}
	dS_t=(\mu-r) S_tdt+\sigma S_tdB_t.
\end{equation*}
Reasonable values of $\mu$ and $\sigma$ are taken from the set $\{0.08,0.1,0.12\}$ and $\{0.1\}$, which are ``typical" for simulating a well-diversified stock index.
In the GBM case with constant $\mu, \sigma$, the optimal discounted mean-variance portfolio selection \eqref{equ:theta1} can be simplified into \eqref{equ:theta_real_time}.
Thus, there is no difference between portfolio selection A and B except the estimation error of CP at each trading day.

We independently generate 10,000 processes of risky asset price under a given parameters setting $\{\mu,\sigma\}$.
And, the mean estimation of CP for each process is shown in subfigure (a)-(c) of Figure \ref{fig:rho_wT} with boxplot graphs under different portfolio selections.
In subfigure (a)-(c), the estimation of CP in portfolio selection A is more stable with lower estimating volatility, which has been incompletely shown in Section \ref{se:A}.
In other words, the estimation of CP in portfolio selection A is more accurate than that in portfolio selection B in most of the 10,000 processes.
In this way, the investment result in portfolio selection A can perform better.

Subfigure (d)-(f) of Figure \ref{fig:rho_wT} show the terminal wealth of portfolio selection A, B compared to the portfolio selection T under these 10,000 processes. 
In subfigure (d)-(f), the distributions of terminal wealth from portfolio selection A are shifted to the right compared to those from portfolio selection B, which means portfolio selection A yields a higher terminal wealth.
What's more, the distributions of terminal wealth from portfolio selection A are close to portfolio selection T, which validates the practical merit of our portfolio selection A.

\begin{figure}[H]
\begin{minipage}[t]{.48\linewidth}
\includegraphics [width=1\textwidth] {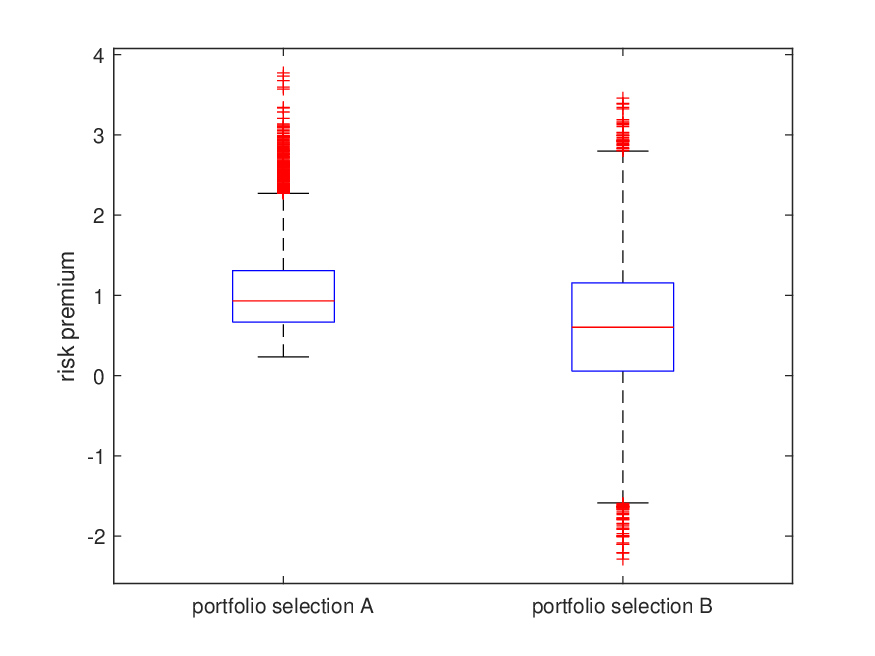}
\centerline{\small (a)~$\mu=0.08$}
\end{minipage}
\begin{minipage}[t]{.48\linewidth}
\includegraphics [width=1\textwidth] {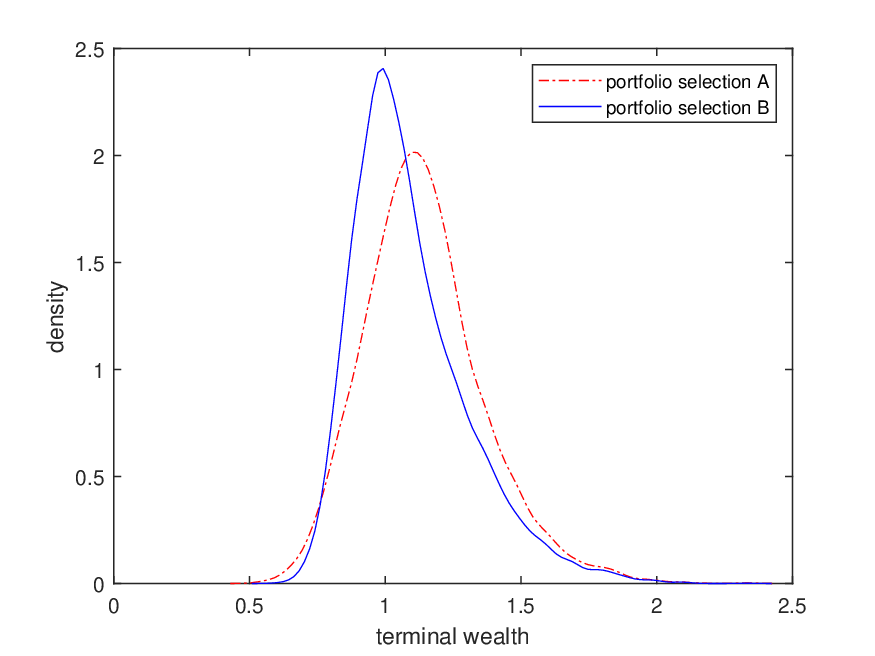}
\centerline{\small (d)~$\mu=0.08$}
\end{minipage}
\end{figure}

\begin{figure}[H]
\begin{minipage}[t]{.48\linewidth}
\includegraphics [width=1\textwidth] {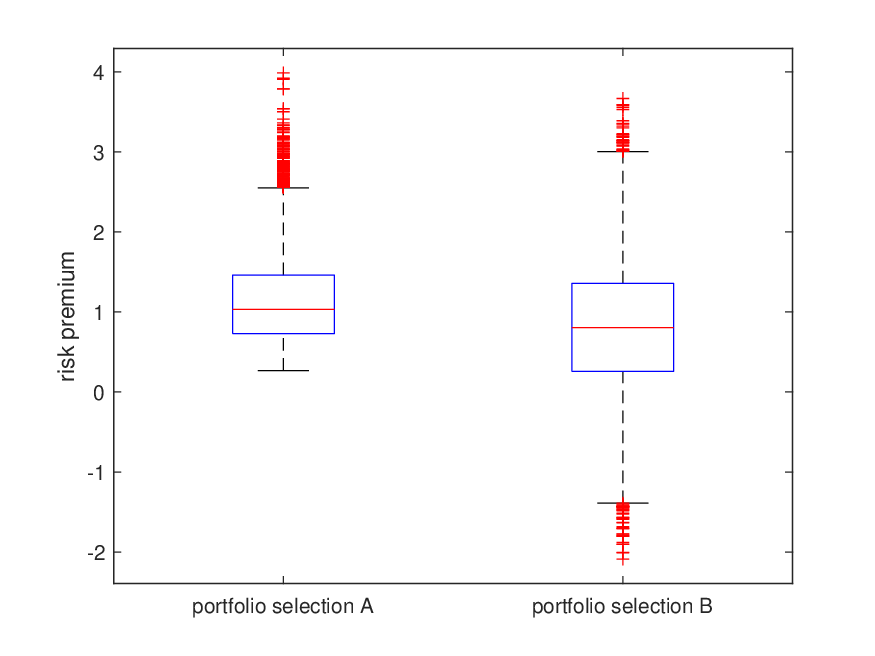}
\centerline{\small (b)~$\mu=0.1$}
\end{minipage}
\begin{minipage}[t]{.48\linewidth}
\includegraphics [width=1\textwidth] {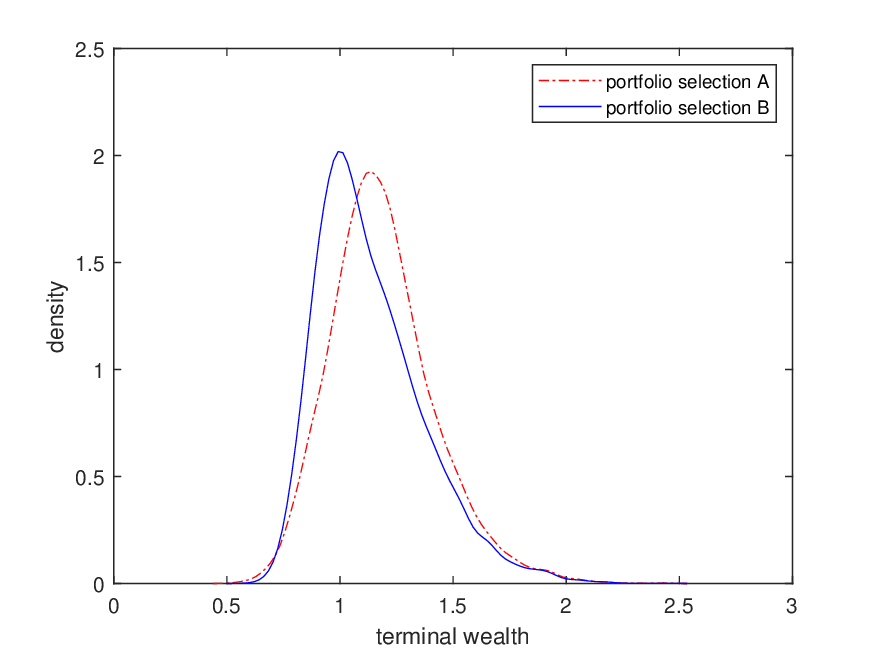}
\centerline{\small (e)~$\mu=0.1$}
\end{minipage}
\end{figure}

\begin{figure}[H]
\begin{minipage}[t]{.48\linewidth}
\includegraphics [width=1\textwidth] {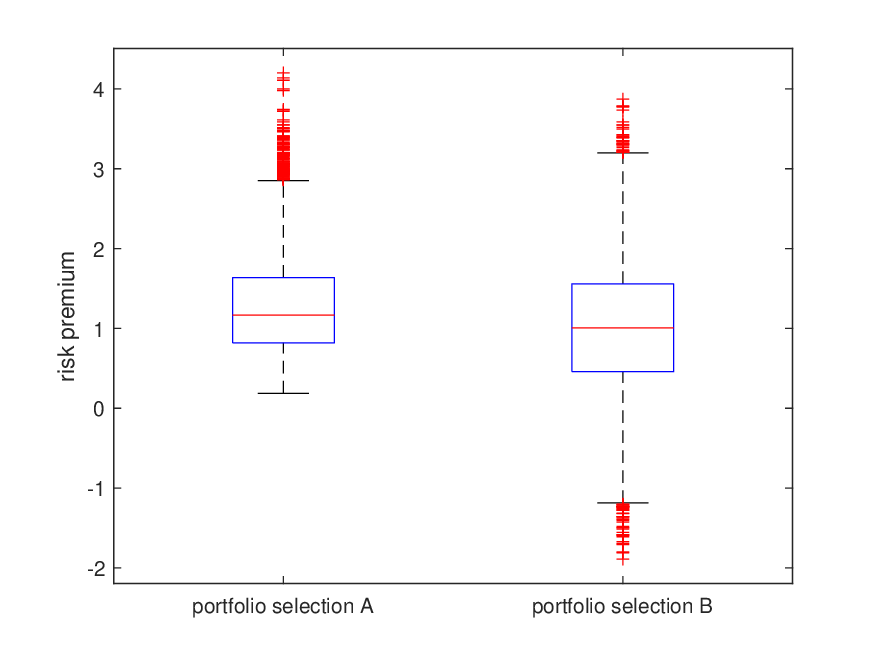}
\centerline{\small (c)~$\mu=0.12$}
\end{minipage}
\begin{minipage}[t]{.48\linewidth}
\includegraphics [width=1\textwidth] {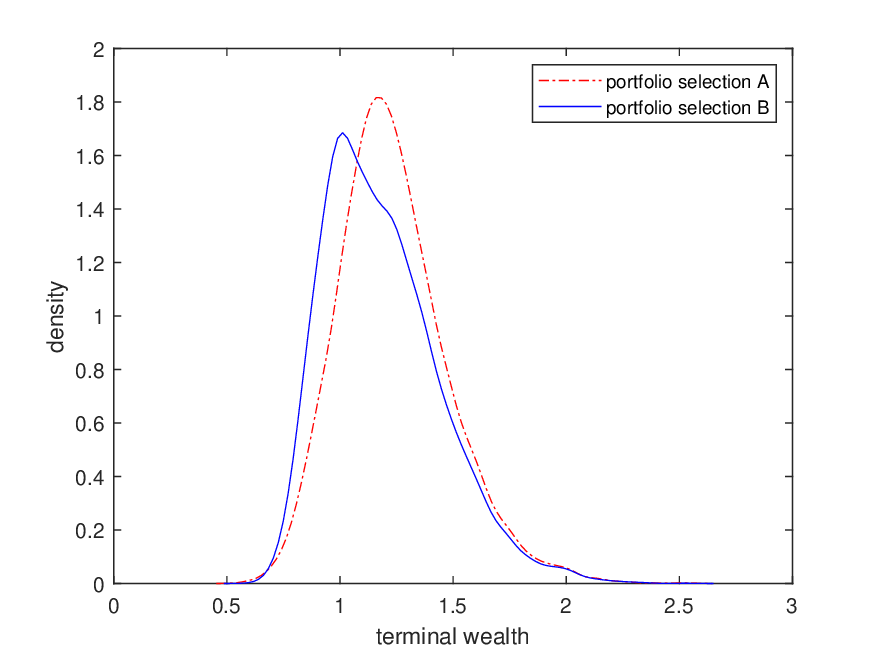}
\centerline{\small (f)~$\mu=0.12$}
\end{minipage}
\caption{The estimation of CP and corresponding terminal wealth.}
\label{fig:rho_wT}
\end{figure}

\subsection{The Heston Case}

The simulation studies in the previous subsection confirm the outperformance of our portfolio selection A under GBM with constant return rate and volatility.
To further test the practical feasibility of portfolio selection A, we consider the Heston model \citep{Heston1993closed} for the discounted risky asset price:
\begin{equation}\label{equ:heston_model}
\begin{split}
	&dS_t=a X_t S_t dt + \sqrt{X_t} S_t dB_t^1\\
	&dX_t=\iota (k-X_t) dt + v\sqrt{X_t} (\kappa dB_t^1+\sqrt{1-\kappa^2} dB_t^2),
\end{split}
\end{equation}
in which the Brownian motion $dB_t^1, dB_t^1$ are independent and $\mu(t), \sigma(t)$ are related to both time $t$ and exogenous variable $X_t$.
The exogenous variable $X_t$ could represent some economic factors, such as the dividend yields or the instantaneous volatility \citep{Dai2023Learning}.
In Heston case, $\mu(t)=aX_t+r, \sigma(t)=\sqrt{X_t}$, which change frequently over time.

Before comparing portfolio selection A, B, N and T under various evaluation criteria, we first illustrate their performance in estimating AP.
We take parameters setting $a=8.5$, $\iota=42.5$, $k=0.01$, $v=0.6$, $X_0=0.02$, $\kappa=-0.7$, which are “typical” for simulation purpose \citep{Yan2019Open}.
Then, we take samples to simulate the process of risky asset price with the parameters setting and the estimations of AP are correspondingly shown in Figure \ref{fig:K}.

In each subfigure of Figure \ref{fig:K}, 
the dot dash line is the estimation of AP in portfolio selection A with \eqref{equ:estimate_K}, while the solid line is the traditional MLE of AP in portfolio selection B.
And, the estimations of AP are updated through time based on observed data.
Under the actual parameters setting, the theoretically value of AP in portfolio selection T is shown with dashed line.
We find that, in all the samples, the estimations of AP in portfolio selection A are more accurate than that in portfolio selection B and are very close to the corresponding theoretically value, which illustrate the practical merit of our portfolio selection A.
What's more, the estimation of AP in portfolio selection A changes slowly over time, being 
\begin{equation*}
	\hat{\bold{K}}(0,T) 
	=\dfrac{1}{T} \Big(\sum_{i=0}^{k-1}+ \sum_{i=k-(N-k)}^{k-1}\Big) \Big(\widetilde{W}_{t_{i+1}}-\widetilde{W}_{t_i}\Big)^2
\end{equation*}
at time $t_k$ and
\begin{equation*}
	\hat{\bold{K}}(0,T) 
	=\dfrac{1}{T} \Big(\sum_{i=0}^{k}+ \sum_{i=(k+1)-(N-(k+1))}^{k}\Big) \Big(\widetilde{W}_{t_{i+1}}-\widetilde{W}_{t_i}\Big)^2
\end{equation*}
at time $t_{k+1}$.
However, AP in portfolio selection B is only related to the estimation of risk premium at current, which is neither accurate nor stable in Heston case.

\begin{figure}[H]
\begin{minipage}[t]{.32\linewidth}
\includegraphics [width=1\textwidth] {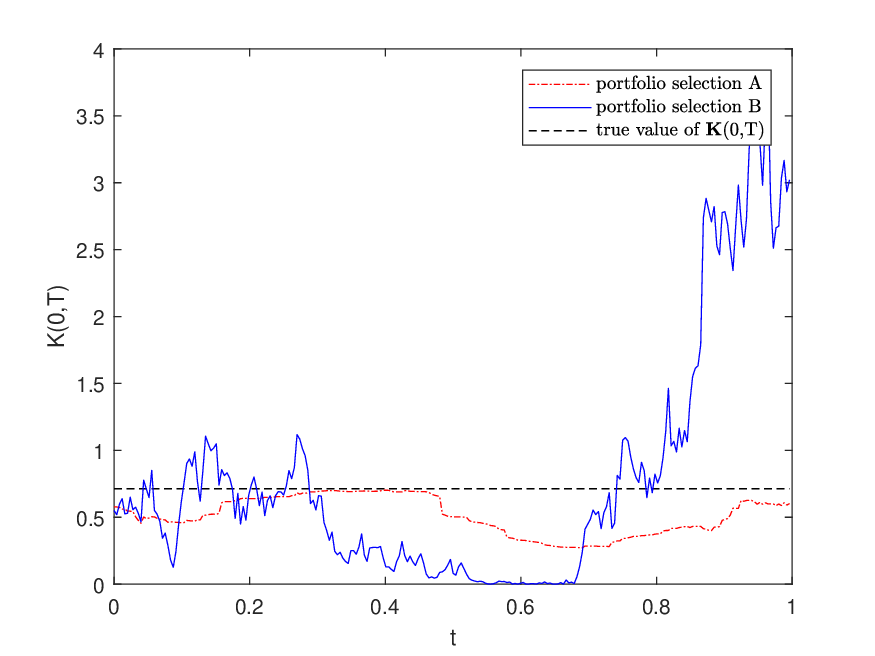}
\end{minipage}
\begin{minipage}[t]{.32\linewidth}
\includegraphics [width=1\textwidth] {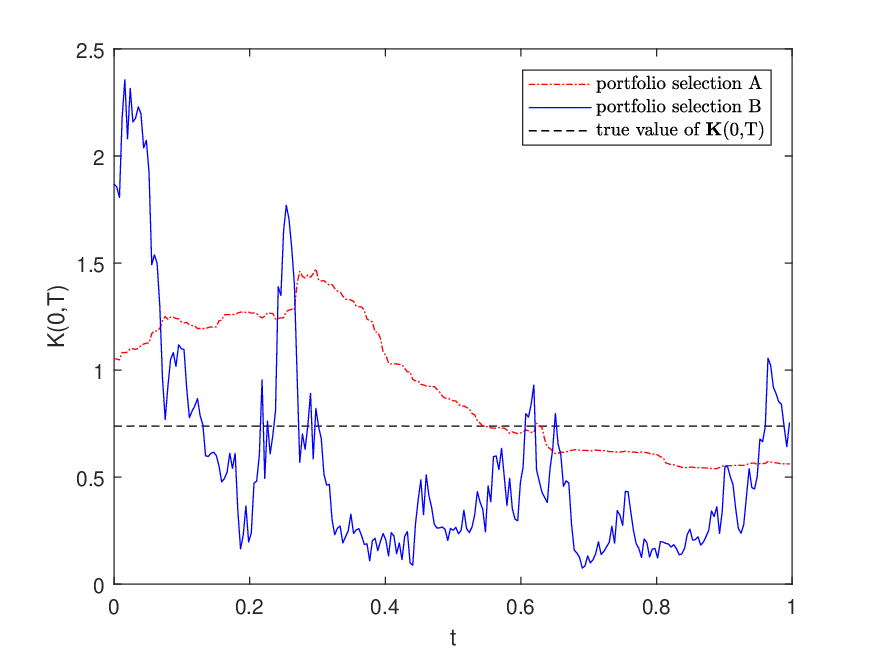}
\end{minipage}
\begin{minipage}[t]{.32\linewidth}
\includegraphics [width=1\textwidth] {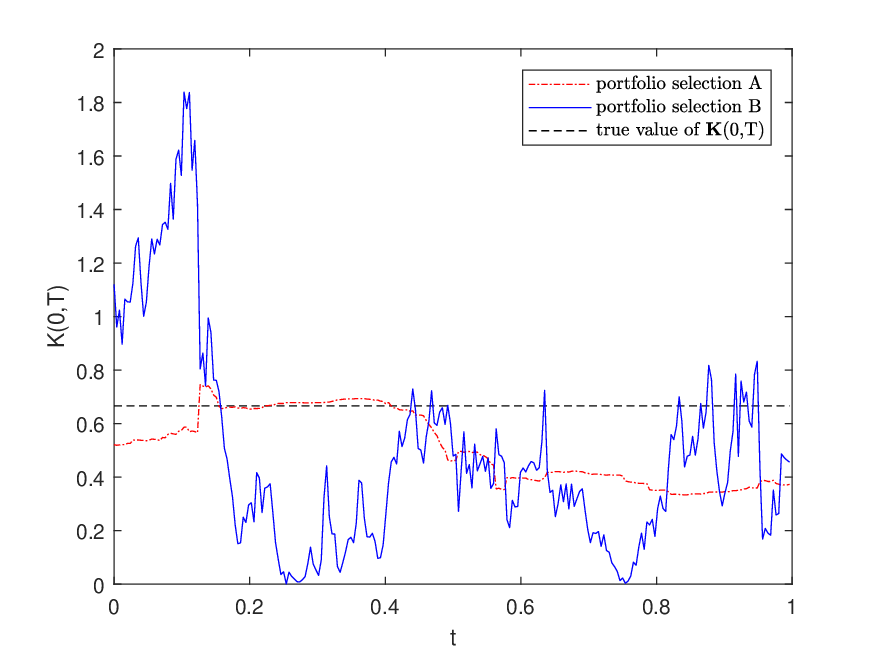}
\end{minipage}
\end{figure}

\begin{figure}[H]
\begin{minipage}[t]{.32\linewidth}
\includegraphics [width=1\textwidth] {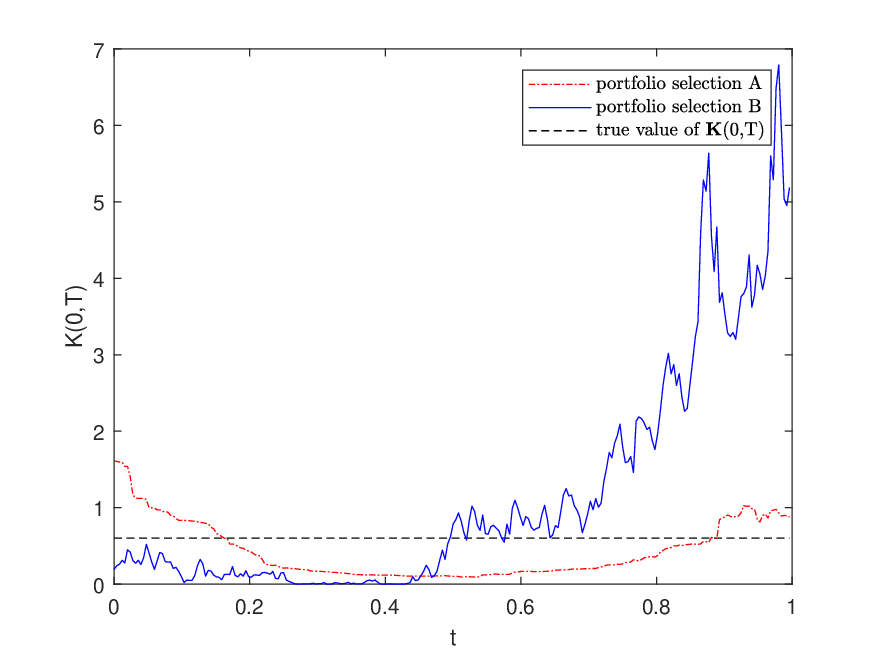}
\end{minipage}
\begin{minipage}[t]{.32\linewidth}
\includegraphics [width=1\textwidth] {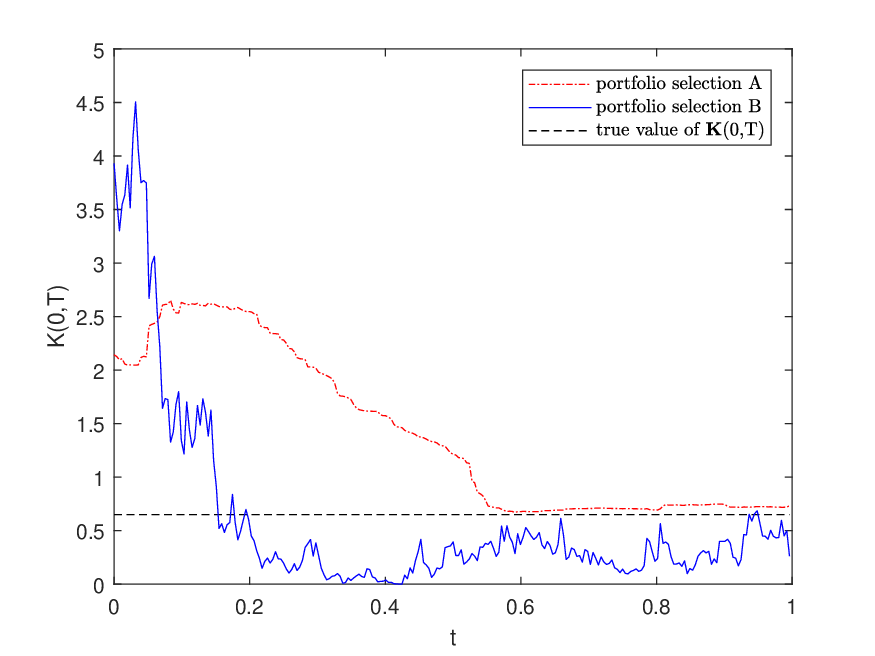}
\end{minipage}
\begin{minipage}[t]{.32\linewidth}
\includegraphics [width=1\textwidth] {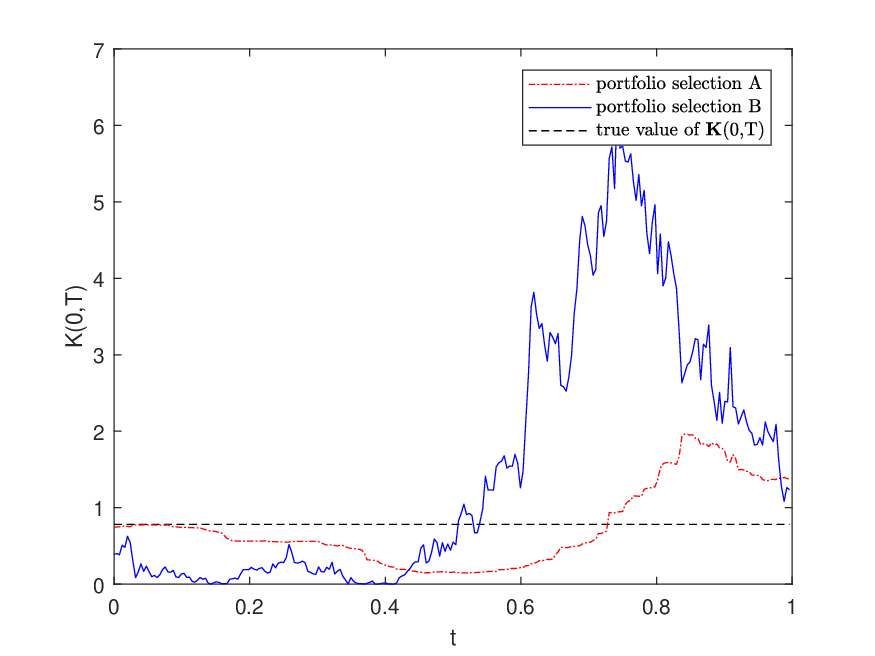}
\end{minipage}
\caption{The process of risky asset price and the corresponding estimation of $\bold{K}(0,T)$ in the simulated financial markets with $a=8.5, \iota=42.5, k=0.01, v=0.6, \kappa=-0.7, X_0=0.02$.}
\label{fig:K}
\end{figure}

Further, we conduct comparisons of portfolio selection A, B, N and T under the Heston model with evaluation criteria Certainty-Equivalent Return (CEQ) \citep{DeMiguel2007Optimal}
\begin{equation*}
	\text{CEQ}={\rm E}\Big(\dfrac{W_T-W_0}{W_0}\Big)-\gamma {\rm Var}\Big(\dfrac{W_T-W_0}{W_0}\Big)
\end{equation*}
and Sharpe Ratio (SR) \citep{Sharpe1994The}
\begin{equation*}
	\text{SR}=\dfrac{{\rm E}\Big(\dfrac{W_T-W_0}{W_0}\Big)-r}{{\rm Std} \Big(\dfrac{W_T-W_0}{W_0}\Big)}.
\end{equation*}
CEQ and SR can represent investment return per unit risk of the portfolio selections.
In particular, when $W_0=w^o=1$, the objective function can be expressed as CEQ plus 1.
In this way, CEQ can measure the value of objective function under parameters estimated error when implementing different portfolio selections.
What's more, we also introduce Turnover Rate (TR) 
\begin{equation*}
	\text{TR}
	={\rm E}\Big(\sum_{i=1}^{N}|\dfrac{\theta_{(i-1)\frac{T}{N}}}{W_{i\frac{T}{N}}}\dfrac{S_{i\frac{T}{N}}}{S_{(i-1)\frac{T}{N}}}-\dfrac{\theta_{i\frac{T}{N}}}{W_{i\frac{T}{N}}}|\Big)
\end{equation*}
to show the transition cost of different portfolio selections.

We implement the portfolio selections derived based on the process \eqref{equ:S}, but test them on data from Heston model.
We independently generate 10,000 processes of risky asset price under a given parameters setting $\{a,\iota,k,v,X_0,\kappa\}$. 
After implementing each portfolio selection, 10,000 processes of wealth are obtained. 
Then, CEQ and SR is calculated with the expectation and variance of the 10,000 one-year investment return. 
TR is calculated with the average of absolute value of the trades across the risky asset and risk-free asset.
To assess the robustness of the outperformance of portfolio selection A, we simulate the portfolio selections with different setting $\iota\in\{40,42.5,45\}, \kappa\in\{-0.6,-0.7,-0.8\}$.
$\iota$ presents the mean reversion rate of exogenous variable $X_t$,
and $\kappa$ affects the correlation between the process of risky asset and exogenous variable.

CEQ under different simulated markets is shown in Table \ref{table:heston_cer1},
in which the portfolio selection A does perform better than B and N in all the markets with significant difference $(p<0.001)$.
When $\iota=0, v=0$, the Heston case \eqref{equ:heston_model} degenerates into GBM with constant parameters $\mu, \sigma$, which is discussed in Section \ref{se:gbm}.
When $\iota, v$ are far from 0, the risky asset return deviates from normal distribution.
And, the portfolio selection A still outperforms B with higher CEQ when the financial market model is assumed incorrectly, i.e., the process \eqref{equ:S} vs Heston Case.

%

We report SR of portfolio selection A, B, N and T under the Heston case in Table \ref{table:heston_sr1}.
Although SR is not the optimization target of mean-variance analysis, SR of portfolio selection A is higher than B in all the market setting with significant level p<0.001.
And the gap of SR between portfolio selection A and N is slight with the relative differences less than 6\%.

TR measures the transaction cost of portfolio selections, which is shown in Table \ref{table:heston_tr1}.
In particular, portfolio selection N does not involve any adjustment of the composition of risky asset basket during the whole planning investment horizon, and TR of portfolio selection N is 0.
In Table \ref{table:heston_tr1}, TR of portfolio selection A is lower than B with lower transition cost, which further proves the superiority of portfolio selection A.

\setlength\LTcapwidth{\textwidth}
\begin{longtable}{p{4cm}|p{2cm}|p{2cm}|p{2cm}|p{2cm}}
\caption{Comparison of CEQ for different portfolio selections in the Heston model with $a=8.5, k=0.01, v=0.6, X_0=0.02$.}\label{table:heston_cer1}\\ 
\toprule
~\quad\qquad$\{\iota,\kappa\}$   &\qquad A &\qquad B &\qquad N &\qquad T
\\\hline        
  \thead{$\{40,-0.6\}$}
  &\thead{$0.1273$}
  &\thead{$0.0667$}
  &\thead{$0.0927$}
  &\thead{$0.1334$}
  \\\hline
  \thead{$\{40,-0.7\}$}
  &\thead{$0.1288$}
  &\thead{$0.0679$}
  &\thead{$0.0930$}
  &\thead{$0.1348$}
  \\\hline
  \thead{$\{40,-0.8\}$}
  &\thead{$0.1306$}
  &\thead{$0.0696$}
  &\thead{$0.0934$}
  &\thead{$0.1364$}
  \\\hline
  \thead{$\{42.5,-0.6\}$}
  &\thead{$0.1269$}
  &\thead{$0.0665$}
  &\thead{$0.0926$}
  &\thead{$0.1331$}
  \\\hline
  \thead{$\{42.5,-0.7\}$}
  &\thead{$0.1284$}
  &\thead{$0.0676$}
  &\thead{$0.0929$}
  &\thead{$0.1344$}
  \\\hline
  \thead{$\{42.5,-0.8\}$}
  &\thead{$0.1301$}
  &\thead{$0.0692$}
  &\thead{$0.0933$}
  &\thead{$0.1359$}
  \\\hline
  \thead{$\{45,-0.6\}$}
  &\thead{$0.1265$}
  &\thead{$0.0663$}
  &\thead{$0.0926$}
  &\thead{$0.1327$}
  \\\hline
  \thead{$\{45,-0.7\}$}
  &\thead{$0.1279$}
  &\thead{$0.0674$}
  &\thead{$0.0928$}
  &\thead{$0.1340$}
  \\\hline
  \thead{$\{45,-0.8\}$}
  &\thead{$0.1295$}
  &\thead{$0.0690$}
  &\thead{$0.0932$}
  &\thead{$0.1354$}
  \\
\bottomrule
\end{longtable}

\setlength\LTcapwidth{\textwidth}
\begin{longtable}{p{4cm}|p{2cm}|p{2cm}|p{2cm}|p{2cm}}
\caption{Comparison of SR for different portfolio selections in the Heston model with $a=8.5, k=0.01, v=0.6, X_0=0.02$.}\label{table:heston_sr1}\\ 
\toprule
~\quad\qquad$\{\iota,\kappa\}$   &\qquad A &\qquad B &\qquad N &\qquad T
\\\hline        
  \thead{$\{40,-0.6\}$}
  &\thead{$0.8000$}
  &\thead{$0.5235$}
  &\thead{$0.8467$}
  &\thead{$0.8325$}
  \\\hline
  \thead{$\{40,-0.7\}$}
  &\thead{$0.8110$}
  &\thead{$0.5276$}
  &\thead{$0.8580$}
  &\thead{$0.8432$}
  \\\hline
  \thead{$\{40,-0.8\}$}
  &\thead{$0.8239$}
  &\thead{$0.5339$}
  &\thead{$0.8707$}
  &\thead{$0.8553$}
  \\\hline
  \thead{$\{42.5,-0.6\}$}
  &\thead{$0.7971$}
  &\thead{$0.5229$}
  &\thead{$0.8434$}
  &\thead{$0.8297$}
  \\\hline
  \thead{$\{42.5,-0.7\}$}
  &\thead{$0.8074$}
  &\thead{$0.5268$}
  &\thead{$0.8540$}
  &\thead{$0.8397$}
  \\\hline
  \thead{$\{42.5,-0.8\}$}
  &\thead{$0.8195$}
  &\thead{$0.5328$}
  &\thead{$0.8658$}
  &\thead{$0.8511$}
  \\\hline
  \thead{$\{45,-0.6\}$}
  &\thead{$0.7945$}
  &\thead{$0.5223$}
  &\thead{$0.8405$}
  &\thead{$0.8273$}
  \\\hline
  \thead{$\{45,-0.7\}$}
  &\thead{$0.8042$}
  &\thead{$0.5261$}
  &\thead{$0.8504$}
  &\thead{$0.8367$}
  \\\hline
  \thead{$\{45,-0.8\}$}
  &\thead{$0.8156$}
  &\thead{$0.5319$}
  &\thead{$0.8615$}
  &\thead{$0.8473$}
  \\
\bottomrule
\end{longtable}

\setlength\LTcapwidth{\textwidth}
\begin{longtable}{p{4cm}|p{2cm}|p{2cm}|p{2cm}|p{2cm}}
\caption{Comparison of TR for different portfolio selections in the Heston model with $a=8.5, k=0.01, v=0.6, X_0=0.02$.}\label{table:heston_tr1}\\ 
\toprule
~\quad\qquad$\{\iota,\kappa\}$   &\qquad A &\qquad B &\qquad N &\qquad T
\\\hline        
  \thead{$\{40,-0.6\}$}
  &\thead{$3.4673$}
  &\thead{$17.8306$}
  &\thead{$0.0000$}
  &\thead{$3.0495$}
  \\\hline
  \thead{$\{40,-0.7\}$}
  &\thead{$3.4713$}
  &\thead{$17.7008$}
  &\thead{$0.0000$}
  &\thead{$3.0455$}
  \\\hline
  \thead{$\{40,-0.8\}$}
  &\thead{$3.4799$}
  &\thead{$17.5420$}
  &\thead{$0.0000$}
  &\thead{$3.0510$}
  \\\hline
  \thead{$\{42.5,-0.6\}$}
  &\thead{$3.4745$}
  &\thead{$17.8739$}
  &\thead{$0.0000$}
  &\thead{$3.0581$}
  \\\hline
  \thead{$\{42.5,-0.7\}$}
  &\thead{$3.4759$}
  &\thead{$17.7531$}
  &\thead{$0.0000$}
  &\thead{$3.0548$}
  \\\hline
  \thead{$\{42.5,-0.8\}$}
  &\thead{$3.4848$}
  &\thead{$17.6002$}
  &\thead{$0.0000$}
  &\thead{$3.0606$}
  \\\hline
  \thead{$\{45,-0.6\}$}
  &\thead{$3.4798$}
  &\thead{$17.9112$}
  &\thead{$0.0000$}
  &\thead{$3.0662$}
  \\\hline
  \thead{$\{45,-0.7\}$}
  &\thead{$3.4806$}
  &\thead{$17.7988$}
  &\thead{$0.0000$}
  &\thead{$3.0632$}
  \\\hline
  \thead{$\{45,-0.8\}$}
  &\thead{$3.4891$}
  &\thead{$17.6498$}
  &\thead{$0.0000$}
  &\thead{$3.0688$}
  \\
\bottomrule
\end{longtable}

\subsection{The Real Financial Market Case}

We study the dynamic allocation between the risk-free asset and one of the stock indexes, NASDAQ, S\&P500, DJI, to illustrate the performance of our portfolio selection A in the real financial market.
In the real financial market, the true values of parameters can not be obtained, and we only compare the portfolio selection A, B and N.
We take data of stock indexes from January 2014 to December 2023.



There are 1,760 one-year planning investment horizons from January 2014 to December 2023.
CEQ of portfolio selections is calculated with the expectation and variance of the 1,760 one-year investment return which is shown in Table \ref{table:real_cer}.
In Table \ref{table:real_cer}, portfolio selection A does perform better than B, but not as well as portfolio selection N.
Thus, we make further adjustment to portfolio selection A with the key idea that the sophisticated portfolio selection can be combined with portfolio selection N to improve performance \citep{Tu2011Markowitz}. 
More precisely, we propose a simple rule to combine portfolio selection A and N:
\begin{equation*}
\theta^{\text{A+N}}(t)
=
\begin{cases}
	\theta^{\text{A}}(t), & \text{when}~\hat{\sigma}(t)<0.1,\\
	\theta^{\text{N}}(t), & \text{otherwise}.
\end{cases}
\end{equation*}
And the CEQ of adjusted portfolio selection A (denoted as A+N), which is shown in Table \ref{table:real_cer}, outperforms N.
What's more, the SR of portfolio selection A+N performs best among the portfolio selection A+N, B, N in all the stock indexes.
And, the TR of portfolio selection A+N still remains a low value.
We leave the optimal combination method between portfolio selection A and N for future study.

\setlength\LTcapwidth{\textwidth}
\begin{longtable}{p{4cm}|p{2cm}|p{2cm}|p{2cm}|p{2cm}}
\caption{Comparison of CEQ for different portfolio selections in the real financial market.}\label{table:real_cer}\\ 
\toprule
  &~\quad A+N &\qquad A &\qquad B &\qquad N
  \\\hline        
  \thead{NASDAQ}
  &\thead{$0.1139$}
  &\thead{$0.0288$}
  &\thead{$-0.0407$}
  &\thead{$0.1032$}
  \\\hline
  \thead{S\&P500}
  &\thead{$0.1022$}
  &\thead{$0.0365$}
  &\thead{$-0.0622$}
  &\thead{$0.0879$}
  \\\hline
  \thead{DJI}
  &\thead{$0.0877$}
  &\thead{$0.0239$}
  &\thead{$-0.0625$}
  &\thead{$0.0791$}
  \\
\bottomrule
\end{longtable}
  
\setlength\LTcapwidth{\textwidth}
\begin{longtable}{p{4cm}|p{2cm}|p{2cm}|p{2cm}|p{2cm}}
\caption{Comparison of SR for different portfolio selections in the real financial market.}\label{table:real_sr}\\ 
\toprule
  &~\quad A+N &\qquad A &\qquad B &\qquad N
  \\\hline        
  \thead{NASDAQ}
  &\thead{$0.7351$}
  &\thead{$0.5038$}
  &\thead{$0.2678$}
  &\thead{$0.6888$}
  \\\hline
  \thead{S\&P500}
  &\thead{$0.7467$}
  &\thead{$0.3832$}
  &\thead{$-0.0048$}
  &\thead{$0.6732$}
  \\\hline
  \thead{DJI}
  &\thead{$0.6430$}
  &\thead{$0.3314$}
  &\thead{$-0.0224$}
  &\thead{$0.6267$}
  \\
\bottomrule
\end{longtable}

\setlength\LTcapwidth{\textwidth}
\begin{longtable}{p{4cm}|p{2cm}|p{2cm}|p{2cm}|p{2cm}}
\caption{Comparison of TR for different portfolio selections in the real financial market.}\label{table:real_tr}\\ 
\toprule
  &~\quad A+N &\qquad A &\qquad B &\qquad N
  \\\hline        
  \thead{NASDAQ}
  &\thead{$0.6651$}
  &\thead{$6.3882$}
  &\thead{$22.3844$}
  &\thead{$0.0000$}
  \\\hline
  \thead{S\&P500}
  &\thead{$0.8350$}
  &\thead{$5.3954$}
  &\thead{$20.0998$}
  &\thead{$0.0000$}
  \\\hline
  \thead{DJI}
  &\thead{$0.5572$}
  &\thead{$5.3863$}
  &\thead{$21.1776$}
  &\thead{$0.0000$}
  \\
\bottomrule
\end{longtable}

\section{Conclusion}\label{section:6}

We consider the mean-variance portfolio selection in the Geometric Brownian Motion (GBM) with time-varying return rate and volatility of risky asset.
We define two indexes which presents the average profitability of the risky asset (AP) and the current profitability of the risky asset (CP).
Then the pre-commitment mean-variance portfolio selection can be represented by AP and CP, 
and the unknown future information of financial market can be separated with index AP.
For estimating AP, we propose an auxiliary wealth process which corresponds to time-consistent mean-variance portfolio selection and estimate AP with the second-order variation of the auxiliary wealth process.
For CP, we approximate CP with the estimation of AP.
Our numerical study gives strong support to the implement of portfolio selection which is based on the estimations of AP and CP.

\bibliography{MV_tc-16-01}

\end{document}